\crefname{hypothesis}{Hypothesis}{Hypotheses}
\title{Resolvability of Hamming Graphs - PREPRINT\thanks{
\funding{This work has been partially funded by the NSF grant No. 1836914.}}}
\author{Lucas Laird\thanks{Department of Applied Mathematics, University of Colorado, Boulder.}
\and Richard C. Tillquist\thanks{Department of Computer Science, University of Colorado, Boulder.}
\and Stephen Becker\footnotemark[2]
\and Manuel E. Lladser\footnotemark[2] \thanks{Corresponding author.}
\email{(manuel.lladser@colorado.edu)}
}
\newcommand{\kmer}{\hbox{$k$-mer}}
\newcommand{\kmers}{\hbox{$k$-mers}}
\newcommand{\Trace}{{\hbox{Tr}}}
\newcommand{\vct}{{\hbox{vec}}}
\newcommand{\V}{\mathbb{V}}
\newcommand{\Hka}{\mathbb{H}_{k,a}}
\begin{document}

\maketitle

\begin{abstract}
A subset of vertices in a graph is called resolving when the geodesic distances to those vertices uniquely distinguish every vertex in the graph. Here, we characterize the resolvability of Hamming graphs in terms of a constrained linear system and deduce a novel but straightforward characterization of resolvability for hypercubes. We propose an integer linear programming method to assess resolvability rapidly, and provide a more costly but definite method based on Gr\"obner bases to determine whether or not a set of vertices resolves an arbitrary Hamming graph. As proof of concept, we identify a resolving set of size 77 in the metric space of all octapeptides (i.e., proteins composed of eight amino acids) with respect to the Hamming distance; in particular, any octamer may be readily represented as a 77-dimensional real-vector. Representing k-mers as low-dimensional numerical vectors may enable new applications of machine learning algorithms to symbolic sequences.
\end{abstract}

\begin{keywords}
graph embedding, Gr\"obner basis, Hamming distance, Hamming graph, hypercube, integer linear programming, metric dimension, multilateration, resolving set, symbolic data science
\end{keywords}

\begin{AMS}
05C12, 05C50, 05C62, 68R10, 90C35, 92C40
\end{AMS}

\section{Introduction}

In what follows, $k\ge1$ and $a\ge2$ are fixed integers, and we refer to elements in the set $\V:=\{0,\ldots,a-1\}^k$ as $\kmers$, which we represent either as strings or row vectors depending on the context.

The Hamming distance between two $\kmers$ $u$ and $v$, from now on denoted as $d(u,v)$, is the number of coordinates where the $\kmers$ differ, and is a valid metric. The Hamming graph $\Hka$ has $\V$ as its vertex set, and two $\kmers$ $u$ and $v$ are adjacent (i.e. connected by an undirected edge) if and only if $d(u,v)=1$, i.e. $u$ and $v$ differ at exactly one coordinate. As a result, the (geodesic) distance between two vertices in $\Hka$ is precisely their Hamming distance (see Figure~\ref{fig:HamResEx}). The literature refers to the Hamming graph with $a=2$ as the ($k$-dimensional) hypercube. 

A non-empty set $R\subseteq\V$ is called resolving when for all $u,v\in\V$, with $u\ne v$, there exists $r\in R$ such that $d(u,r)\ne d(v,r)$. In other words, $R$ multilaterates $\V$. For instance, $\V$ resolves $\Hka$ because $d(u,v)=0$ if and only if $u=v$. Equivalently, $R\subseteq\V$ is resolving if and only if the transformation $\Phi:\V\to\mathbb{R}^{|R|}$ defined as $\Phi(v):=(d(v,r))_{r\in R}$ is one-to-one. In particular, the smaller a resolving set of $\Hka$, the lower the dimension needed to represent $\kmers$ as points in a Euclidean space, which may be handy e.g. to represent symbolic data numerically for machine learning tasks~\cite{TilLla19}.

\begin{figure}[h!]
\centering
\includegraphics[width = 0.33\textwidth]{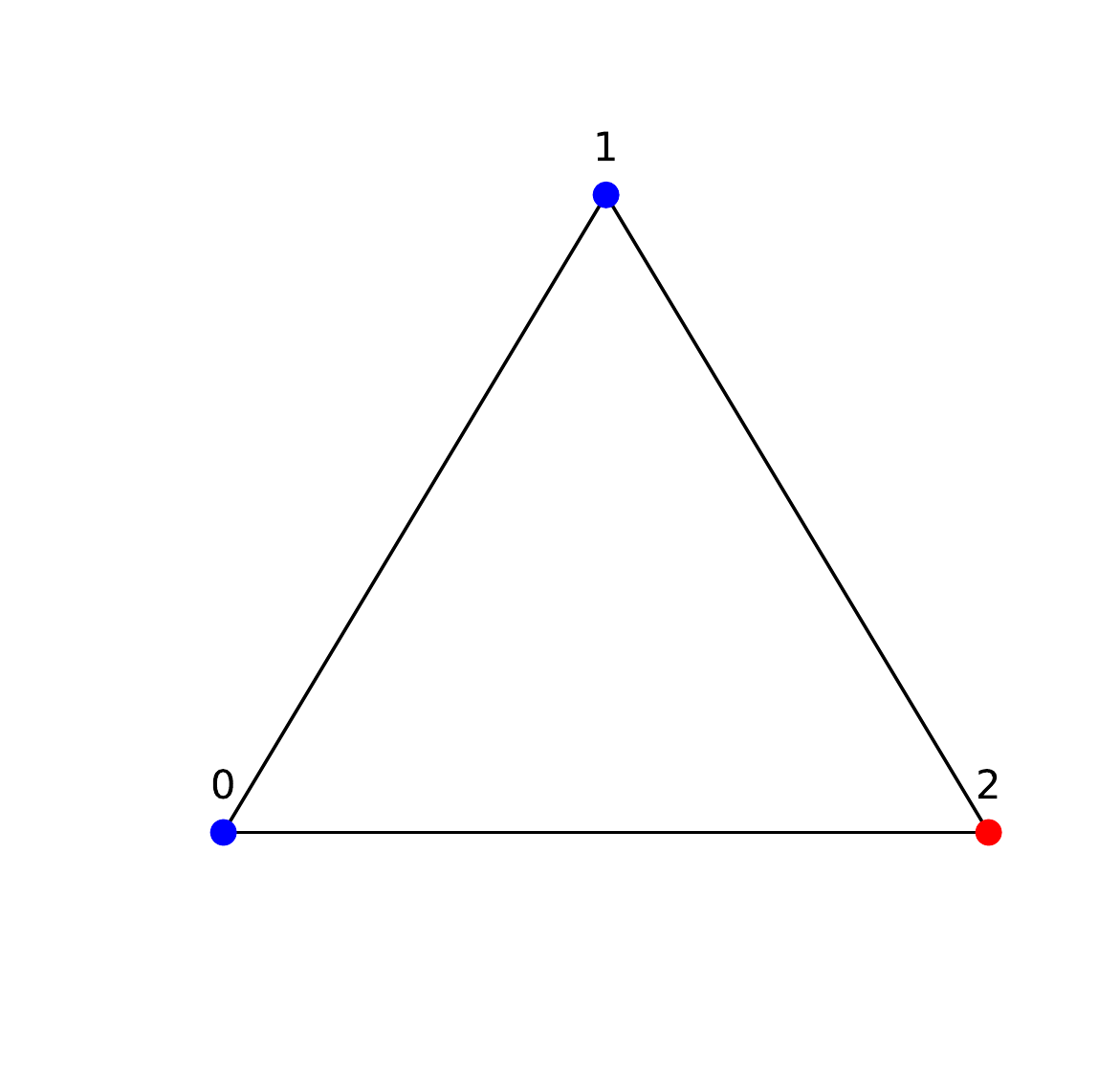}\includegraphics[width = 0.33\textwidth]{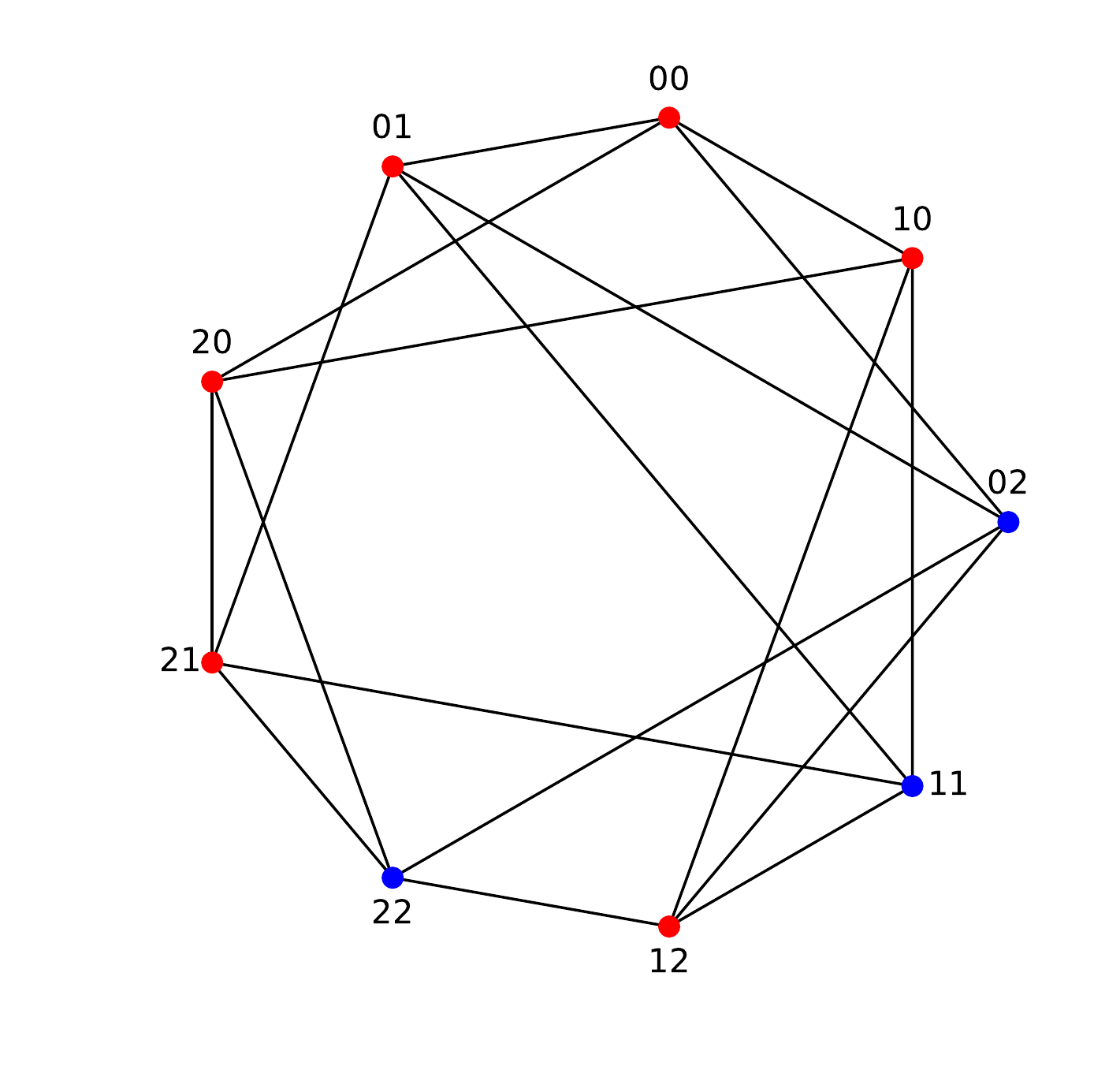}\includegraphics[width = 0.33\textwidth]{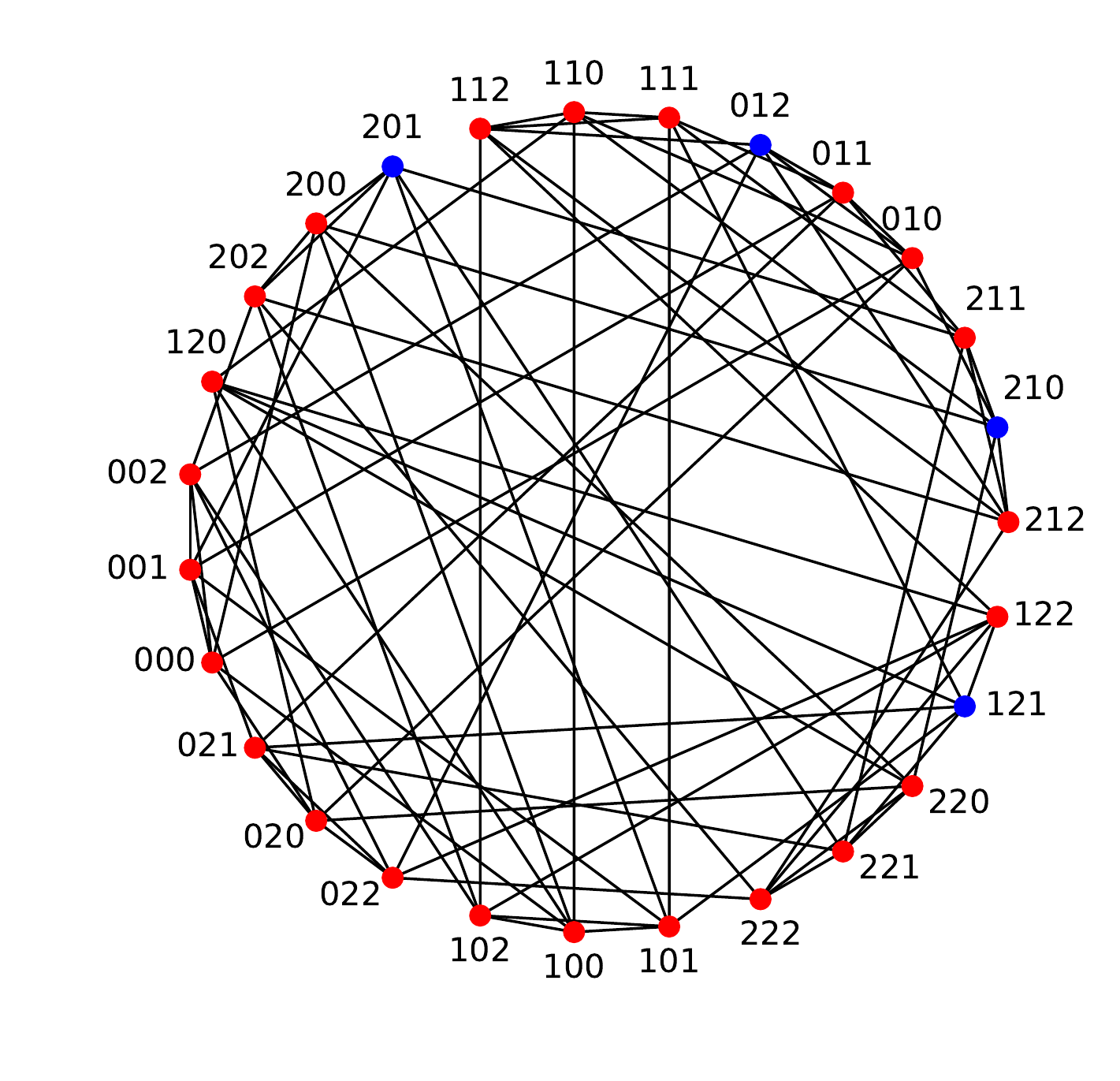}
\caption{Visual representation of $\mathbb{H}_{1,3}$, $\mathbb{H}_{2,3}$, and $\mathbb{H}_{3,3}$. Blue-colored vertices form minimal resolving sets in their corresponding Hamming graph.}
\label{fig:HamResEx}
\end{figure}

The metric dimension of $\Hka$, which we denote $\beta(\Hka)$, is defined as the size of a minimal resolving set in this graph~\cite{HarMel76,Sla75}. For instance, $\beta(\mathbb{H}_{1,a})=(a-1)$ because $\mathbb{H}_{1,a}$ is isomorphic to $K_{a-1}$, the complete graph on $(a-1)$ vertices~\cite{chartrand2000resolvability}. Unfortunately, computing the metric dimension of an arbitrary graph is a well-known NP-complete problem~\cite{Coo71,GarJoh79,KhuRagRos96}, and it remains unknown if this complexity persists when restricted to Hamming graphs. In fact, the metric dimension of hypercubes is only known up to dimension $k=10$~\cite{HarMel76}, and values have been conjectured only up to dimension $k=17$~\cite{MlaKraKovEtAl12}---see OEIS sequence A303735 for further details~\cite{OEIS19}.

\newpage

Integer linear programming (ILP) formulations have been used to search for minimal resolving sets~\cite{chartrand2000resolvability,currie2001metric}. In the context of Hamming graphs, a potential resolving set $R$ is encoded by a binary vector $y$ of dimension $a^k$ such that $y_j=1$ if $j\in R$ and $y_j=0$ if $j\in\V\setminus R$. One can then search for a minimal resolving set for $\Hka$ by solving the ILP~\cite{chartrand2000resolvability}:
\begin{alignat}{2} 
&\min\limits_y \;  &  &\sum_{j\in\V} y_j  \notag \\
&\text{subject to} \;  &  &\sum_{j\in\V} |d(u,j)-d(v,j)| \cdot y_j \ge 1, \; \forall u\ne v\in\V \\
 & &  & y_j \in \{0,1\}, \; \forall j\in\V. \notag
\label{eq:oldILP}
\end{alignat}
The first constraint ensures that for all pairs of different vertices $u$ and $v$, there is some $j\in R$ such that $|d(u,j)-d(v,j)| > 0$, hence $R$ resolves $\Hka$. The objective penalizes the size of the resolving set. (A variant due to~\cite{currie2001metric} is similar but stores $a^k$ copies of a binary version of the distance matrix of the graph.) One downside of this formulation is that forming the distance matrix of $\Hka$ requires $\mathcal{O}(a^{2k})$ storage, as well as significant computation. Moreover, standard approaches to reduce the computation below $\mathcal{O}(a^{2k})$, such as fast multipole methods~\cite{greengard1987fast} and kd-trees~\cite{bentley1975multidimensional}, do not obviously apply. Even if one could compute all pairwise distances between nodes, simply storing the distance matrix is impractical. To fix ideas, the graph $\mathbb{H}_{8,20}$---which is associated with octapeptides (see \cref{sec:protein_representation})---has $20^8$ nodes, so storing the distance matrix with $\log_2(8)=3$ bits per entry and taking advantage of symmetry would require $3{20^8\choose 2}$ bits, or approximately a prohibitive 123 exabytes.

Due to the above difficulties, other efforts have focused on finding small resolving sets rather than minimal ones. When $a^k$ is small, resolving sets for $\Hka$ may be determined using the so-called Information Content Heuristic (ICH) algorithm~\cite{HauSchVie12}, or a variable neighborhood search algorithm~\cite{MlaKraKovEtAl12}. Both approaches quickly become intractable with increasing $k$. However, the highly symmetric nature of Hamming graphs can be taken advantage of to overcome this problem. Indeed, recent work~\cite{TilLla19} has shown that $\beta(\Hka)\le\beta(\mathbb{H}_{k-1,a})+\lfloor a/2\rfloor$; in particular, $\beta(\Hka)\le(k-1)\lfloor a/2\rfloor+(a-1)$ i.e., just $\mathcal{O}(k)$ nodes are enough to resolve all the $a^k$ nodes in $\Hka$. Moreover, one can find a resolving set of size $\mathcal{O}(k)$ in only $\mathcal{O}(ak^2)$ time~\cite{TilLla19}. 

This manuscript is based on the recent Bachelor's thesis~\cite{Lai19}, and has two overarching goals. First, it aims to develop practical methods for certifying the resolvability, or lack thereof, of subsets of nodes in arbitrary Hamming graphs. So far, this has been addressed for hypercubes in the literature~\cite{Beardon:2013} but remains unexamined for arbitrary values of the parameter $a$. While our work does not directly address the problem of searching for minimal resolving sets, verifying resolvability is a key component of any such search and may shed new light on the precise metric dimension of $\Hka$ in future investigations. Second, this paper aims also to exploit said characterization to remove unnecessary nodes---if any---in known resolving sets. This problem, which is infeasible by brute force when $a^k$ is large, has not received any attention in the literature despite being crucial for the embedding of $\kmers$ into the Euclidean space of a lowest possible dimension.

The paper is organized as follows. Our main theoretical results are presented first in~\cref{sec:main}. \cref{thm:Az=0} provides the foundation from which we address the problem of verifying resolvability in Hamming graphs and implies a new characterization of resolvability of hypercubes (\cref{cor:symp_system}). An illustrative example shows the utility of \cref{thm:Az=0} but raises several practical challenges in its implementation on large Hamming graphs. \Cref{sec:grobner} describes a computationally demanding verification method based on Gr\"obner bases that is nevertheless more efficient than the brute force approach and determines with certainty whether or not a set of nodes in $\Hka$ is resolving. Computational issues are addressed in~\cref{sec:ILP} with a novel ILP formulation of the problem. This approach is fast but stochastic and hence has the potential to produce false positives or false negatives. \Cref{sec:complexity_experiments} compares the run time of these methods against a brute force approach across small Hamming graphs. Combining the techniques from sections~\ref{sec:grobner} and~\ref{sec:ILP}, \cref{sec:protein_representation} presents a simple approach to discovering and removing redundant nodes in a given resolving set. This approach allows us to improve on previous bounds on the metric dimension of the Hamming graph $\mathbb{H}_{8,20}$. Finally, two appendices provide background information about Gr\"obner bases and linear programming.

All code used in this manuscript is available on GitHub (\url{https://github.com/hamming-graph-resolvability/Hamming_Resolvability}).

\section{Main results}
\label{sec:main}

In what follows $\Trace(A)$ denotes the trace of a square matrix $A$, $B'$ the transpose of a matrix or vector $B$, and $\vct(C)$ the column-major ordering of a matrix $C$ i.e. the row vector obtained by appending from left to right the entries in each column of $C$. For instance:
\[\vct\left(\left[\begin{array}{cc} a & b \\ c & d\end{array}\right]\right)=(a,c,b,d).\]
In addition, $\bar D$ denotes the flip of the entries in a binary matrix (or vector) $D$, that is 0 is mapped to 1, and vice versa.

The one-hot encoding of a $\kmer$ $v$ is defined as the binary matrix $V$ of dimension $(a\times k)$ such that $V[i,j]=1$ if and only if $(i-1)=v[j]$ (the offset in $i$ is needed since the reference alphabet is $\{0,...,a-1\}$ instead of $\{1,\ldots,a\}$). Here, $V[i,j]$ denotes the entry in row-$i$ and column-$j$ of the matrix $V$, and similarly $v[j]$ denotes the $j$-th coordinate of the vector $v$. We also follow the convention of capitalizing $\kmer$ names to denote their one-hot encodings.

Our first result links one-hot encodings of $\kmer$s with their Hamming distance. Note this result applies to any alphabet size, not just binary.

\begin{lemma}\label{lem:UtV}
If $u,v$ are $\kmers$ with one-hot encodings $U,V$, respectively, then $d(u,v)=k-\Trace(U'V)$; in particular, $d(u,v)=\Trace(U'\bar V)$.
\end{lemma}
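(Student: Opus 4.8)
The plan is to work directly from the definitions of the one-hot encodings and the Hamming distance, translating the combinatorial statement ``$u$ and $v$ agree at coordinate $j$'' into an algebraic statement about the matrices $U$ and $V$. The key observation is that the $(m,n)$ entry of the product $U'V$ is $\sum_{i=1}^{a} U[i,m]\,V[i,n]$, and the diagonal entries (which is what the trace collects) are $\sum_{i=1}^{a} U[i,j]\,V[i,j]$ for $j=1,\ldots,k$. Because $U$ and $V$ are one-hot, each column has exactly one nonzero entry, so this inner product equals $1$ precisely when the unique $1$ in column $j$ of $U$ sits in the same row as the unique $1$ in column $j$ of $V$, i.e. when $u[j]=v[j]$, and equals $0$ otherwise.

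First I would establish that identity for a single column: for each $j$, $\sum_{i=1}^{a} U[i,j]\,V[i,j] = \mathbf{1}\{u[j]=v[j]\}$. This follows immediately from the definition $V[i,j]=1 \iff (i-1)=v[j]$, since the product $U[i,j]V[i,j]$ is nonzero only for the row index $i$ with $i-1=u[j]$ \emph{and} $i-1=v[j]$ simultaneously. Summing over $j$ from $1$ to $k$ then gives
\begin{equation}
\Trace(U'V)=\sum_{j=1}^{k}\mathbf{1}\{u[j]=v[j]\}=\#\{j : u[j]=v[j]\},\notag
\end{equation}
which is exactly the number of coordinates where $u$ and $v$ \emph{agree}. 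Since the Hamming distance $d(u,v)$ counts the coordinates where they \emph{differ}, and there are $k$ coordinates in total, we get $d(u,v)=k-\Trace(U'V)$, the first claim.

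For the second claim, I would invoke the flip operation $\bar V$, whose columns are still one-hot-complementary: $\bar V[i,j]=1$ iff $(i-1)\ne v[j]$. The diagonal entry of $U'\bar V$ in position $j$ is $\sum_{i=1}^{a} U[i,j]\,\bar V[i,j]$, which equals $1$ exactly when the single $1$ of column $j$ of $U$ avoids the single $1$ of column $j$ of $V$, i.e. when $u[j]\ne v[j]$. Hence $\Trace(U'\bar V)$ counts the disagreement coordinates directly, giving $\Trace(U'\bar V)=d(u,v)$. Alternatively, one can derive it from the first identity by noting $U'\bar V = U'(J-V)$ where $J$ is the all-ones $(a\times k)$ matrix, so $\Trace(U'\bar V)=\Trace(U'J)-\Trace(U'V)=k-\Trace(U'V)=d(u,v)$, using that $\Trace(U'J)=k$ because each column of $U$ has a single $1$.

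I do not anticipate a genuine obstacle here; the result is essentially a bookkeeping identity once the one-hot structure is unpacked. The only point requiring a little care is the index bookkeeping---the offset $(i-1)=v[j]$ in the definition of the encoding---and making sure the column-wise argument is stated cleanly enough that summing over $j$ to obtain the trace is transparent. Verifying $\Trace(U'J)=k$ (equivalently, that each column of a one-hot matrix sums to $1$) is the small lemma-within-the-lemma that makes the second, slicker derivation of the $\bar V$ identity immediate.
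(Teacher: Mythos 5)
Your proof is correct and follows essentially the same route as the paper: column-wise inner products of one-hot columns serve as indicators of agreement, summing over columns gives $\Trace(U'V)=\#\{j: u[j]=v[j]\}$, and the $\bar V$ identity follows from $\bar V = 1^{a\times k}-V$ together with $\Trace(U'1^{a\times k})=k$, which is exactly the paper's argument. Your additional direct counting argument for $\Trace(U'\bar V)$ is a harmless (and correct) bonus, but not a substantively different approach.
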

\begin{proof}
Let $U_i$ and $V_i$ be the $i$-th column of $U$ and $V$, respectively. Clearly, if $u[i]=v[i]$ then $\langle U_i,V_i\rangle = 1$, and if $u[i]\ne v[i]$ then $\langle U_i,V_i\rangle = 0$, because all but one of the entries in $U_i$ and $V_i$ vanish and the non-vanishing entries are equal to 1. As a result, $\Trace(U'V)=\sum_{i=1}^k\langle U_i,V_i\rangle$ counts the number of positions where $u$ and $v$ are equal; in particular, $d(u,v)=k-\Trace(U'V)$. Finally, observe that if $1^{a\times k}$ denotes the $(a\times k)$ matrix with all entries equal to 1 then $\Trace(U'1^{a\times k})=k$ because every row of $U'$ has exactly one 1 and all other entries vanish. As a result, $d(u,v)=\Trace(U'(1^{a\times k}-V))=\Trace(U'\bar V)$, as claimed.
\end{proof}

We can now give a necessary and sufficient condition for a subset of nodes in an arbitrary Hamming graph to be resolving.

\begin{theorem}\label{thm:Az=0}
Let $v_1,\ldots,v_n$ be $n\ge1$ $\kmers$ and $V_1,\ldots,V_n$ their one-hot encodings, respectively, and define the $(n\times ak)$ matrix with rows 
\begin{equation}
A := \left(\begin{array}{c}
\vct(V_1) \\
\vdots\\
\vct(V_n)
\end{array}\right).
\label{def:A}
\end{equation}
Then $R:=\{v_1,\ldots,v_n\}$ resolves $\Hka$ if and only if $0$ is the only solution to the linear system $Az=0$, with $z$ a column vector of dimension $ak$, satisfying the following constraints: if $z$ is parsed into $k$ consecutive but non-overlapping subvectors of dimension $a$, namely $z=((z_1,\ldots, z_a), (z_{a+1},\ldots, z_{2a}), ... , (z_{(k-1)a+1},\ldots,z_{ka}))'$, then each subvector is the difference of two canonical vectors.
\end{theorem}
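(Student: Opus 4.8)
The plan is to convert the purely combinatorial notion of resolvability into the stated linear-algebraic one by means of \cref{lem:UtV} together with the elementary identity $\Trace(A'B)=\langle\vct(A),\vct(B)\rangle$, valid for any two matrices of equal dimensions. First I would restate resolvability by its contrapositive: $R$ fails to resolve $\Hka$ precisely when there exist distinct $\kmers$ $u\ne w$, with one-hot encodings $U,W$, such that $d(u,v_i)=d(w,v_i)$ for every $i$. By \cref{lem:UtV} each such equality reads $\Trace(U'V_i)=\Trace(W'V_i)$, i.e. $\Trace\bigl((U-W)'V_i\bigr)=0$.

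The key step is to recognize that the $i$-th entry of the product $Az$ is exactly one of these trace expressions. Setting $z:=\vct(U-W)'$ as a column vector of dimension $ak$, the vectorization--trace identity gives $(Az)_i=\langle\vct(V_i),\vct(U-W)\rangle=\Trace\bigl((U-W)'V_i\bigr)$, since the $i$-th row of $A$ is $\vct(V_i)$ by construction~\eqref{def:A}. Hence $Az=0$ holds if and only if $u$ and $w$ are assigned identical distance vectors by $R$; in particular the pair $(u,w)$ is unresolved by $R$ exactly when this $z$ lies in the (constrained) kernel of $A$.

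It remains to match the admissible vectors $z$ with ordered pairs of $\kmers$. Parsing $z=\vct(U-W)'$ into its $k$ blocks of length $a$, the $j$-th block equals $U_j-W_j$, the difference of the $j$-th columns of $U$ and $W$; since these columns are the canonical vectors $e_{u[j]+1}$ and $e_{w[j]+1}$ of $\Rrob^a$, each block is a difference of two canonical vectors, which is exactly the stated constraint. Conversely, any vector obeying the constraint decodes blockwise to a pair of $\kmers$: if the $j$-th block is $e_{p}-e_{q}$, set $u[j]:=p-1$ and $w[j]:=q-1$. Crucially, $z\ne0$ if and only if some block is nonzero, i.e. $u[j]\ne w[j]$ for some $j$, i.e. $u\ne w$. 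Combining the two directions, a nonzero admissible solution of $Az=0$ corresponds to an unresolved distinct pair, and vice versa, which is the contrapositive of the asserted equivalence.

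I expect the only delicate points to be bookkeeping rather than depth: keeping the row/column conventions of $\vct$ consistent so that the trace identity lands on the correct entry of $Az$, and verifying that the correspondence $z\leftrightarrow(u,w)$ is genuinely onto the constrained set and matches ``nonzero'' with ``distinct'' (including the degenerate blocks $e_p-e_p=0$, which correspond to agreement $u[j]=w[j]$). Once these are pinned down, both implications follow immediately from the chain of equivalences above.
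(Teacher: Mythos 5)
Your proof is correct and takes essentially the same route as the paper's: it invokes \cref{lem:UtV} together with the identity $\Trace(A'B)=\langle\vct(A),\vct(B)\rangle$ to identify unresolved pairs of $\kmers$ with nonzero constrained solutions $z=\vct(U-W)$ of $Az=0$, and then parses $z$ blockwise to match the difference-of-canonical-vectors constraint. If anything, your write-up is slightly more careful than the paper's, since you make the converse decoding (from an admissible nonzero $z$ back to a pair of distinct $\kmers$) explicit, whereas the paper leaves it implicit and even contains a small slip, writing ``$Az\neq0$'' where the negation of resolvability calls for a nonzero $z$ with $Az=0$.
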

\begin{proof}
Before showing the theorem observe that, for any pair of matrices $A$ and $B$ of the same dimension, $\Trace(A'B)=\langle\vct(A),\vct(B)\rangle$, where $\langle\cdot,\cdot\rangle$ is the usual inner product of real vectors.

Consider $\kmers$ $x$ and $y$, and let $X$ and $Y$ be their one-hot encodings, respectively. Due to \cref{lem:UtV}, $d(v_i,x)=d(v_i,y)$ if and only if $\Trace(V_i'(X-Y))=0$ i.e. $\langle\vct(V_i),\vct(X-Y)\rangle=0$. As a result, the set $R$ does not resolve $\Hka$ if and only if there are $\kmers$ $x$ and $y$ such that $Az\neq0$, where $z:=\vct(X)-\vct(Y)$. Note however that each column $X$ and $Y$ equals a canonical vector in $\mathbb{R}^a$; in particular, if we parse $\vct(X)$ and $\vct(Y)$ into $k$ subvectors of dimension $a$ as follows: $\vct(X)=(x_1,\ldots,x_k)$ and $\vct(Y)=(y_1,\ldots,y_k)$, then $z=(x_1-y_1,\ldots,x_k-y_k)$ with $x_i'$ and $y_i'$ canonical vectors in $\mathbb{R}^a$. This shows the theorem.
\end{proof}

\subsection{Illustrative Example}\label{subsec:Illustrative} In $H_{2,3}$ consider the set of nodes $R_0=\{02,11\}$. From~\cref{thm:Az=0}, $R_0$ resolves $H_{2,3}$ if and only if $A_0z=0$, with
\begin{equation}
A_0 = \begin{bmatrix}
    1 & 0 & 0 & 0 & 0 & 1 \\
    0 & 1 & 0 & 0 & 1 & 0 
\end{bmatrix},
\label{def:A0}
\end{equation}
has no non-trivial solution $z$ which satisfies the other constraints in the theorem when writing $z = \big((z_1,z_2,z_3),(z_4,z_5,z_6)\big)'$. Something useful to note about this decomposition is that if a subvector of $z$ has two identical entries, then all the entries in that subvector must vanish.

Note that $A_0$ is already in its reduced row echelon form~\cite{Olver:2018}, and has two pivots: $z_1 = -z_6$ and $z_2=-z_5$. Seeking non-trivial solutions to the constrained linear system, we examine permissible values for $z_5$ and $z_6$:
\begin{itemize}
\item[(a)] If $z_5=-1$ then we must have $(z_4,z_6)\in\{(0,1),(1,0)\}$. Furthermore, if $z_6=1$ then $(z_1,z_2,z_3)=(-1,1,0)$, but if $z_6=0$ then $(z_1,z_2,z_3)=(0,1,-1)$. Consequently, $z=(-1,1,0,0,-1,1)$ and $z=(0,1,-1,1,-1,0)$ solve the constrained system.
\item[(b)] Similarly, we find that $z=(-1,0,1,-1,0,1)$ and $z=(1,0,-1,1,0,-1)$ solve the constrained system when we assume that $z_5=0$.
\item[(c)] Finally, $z=(1,-1,0,0,1,-1)$ and $z=(0,-1,1,-1,1,0)$ are also found to solve the constrained system when we impose that $z_5=1$.
\end{itemize}
Having found at least one non-trivial solution to the constrained linear system, we conclude that $R_0$ does not resolve $H_{2,3}$. (The found $z$'s are in fact the only non-trivial solutions.)

From the proof of \cref{thm:Az=0}, we can also determine pairs of vertices in $H_{2,3}$ which are not resolved by $R_0$. Indeed, using the non-trivial solutions found above we find that 
$12$ and $01$, $21$ and $10$, and $00$ and $22$ are the only pairs of nodes in $H_{2,3}$ which are unresolved by $R_0$. In particular, because the distances between the nodes in each pair and $22$ are different, $R_1:=R_0\cup\{22\}$ resolves $H_{3,2}$.

We can double-check this last assertion noticing that the reduced echelon form of the matrix $A_1$ associated with $R_1$ is
\begin{equation}
    \hbox{rref}(A_1) = \begin{bmatrix}
    1 & 0 & 0 & 0 & 0 & 1 \\
    0 & 1 & 0 & 0 & 1 & 0 \\
    0 & 0 & 1 & 0 & 0 & 1
\end{bmatrix}.
\label{def:A1}
\end{equation}
In particular, $z_1 = -z_6$, $z_2 = -z_5$, and $z_3 = -z_6$. The first and third identity imply that $z_1=z_2$, hence $(z_1,z_2,z_3)=(0,0,0)$. This together with the first and second identity now imply that $(z_4,z_5,z_6)=(0,0,0)$. So, as anticipated, $z=0$ is the only solution to the constrained linear system $A_1z=0$.

In general, if the reduced row echelon form of the matrix given by \cref{thm:Az=0} has $j$ free variables, then there could be up to $3^j$ possible solutions to the associated linear system, each of which would have to be checked for the additional constraints. This exhaustive search could be very time consuming if not impossible. Handling the linear system constraints more systematically and efficiently is the motivation for sections~\ref{sec:grobner} and~\ref{sec:ILP}.

\subsection{Specializations to Hypercubes} In~\cite{Beardon:2013} a necessary and sufficient condition for the resolvability of hypercubes is provided exploiting that $d(u,v)=\|u-v\|_2^2$ when $u$ and $v$ are binary $\kmers$. Next, we reproduce this result using our framework of one-hot encodings instead.

\begin{corollary} \cite[Theorem 2.2]{Beardon:2013}
Let $R=\{v_1,\ldots,v_n\}$ be a set of $n\ge1$ binary $\kmers$, and define the $(n\times k)$ matrix with rows
\[B := \left[\begin{array}{c} 
v_1-\bar{v_1} \\ 
\vdots \\ 
v_n-\bar{v_n} 
\end{array}\right].\]
Then, $R$ resolves $H_{k,2}$ if and only if $\hbox{ker}(B)\cap\{0,\pm1\}^k=\{0\}$.
\label{cor:mat_system}
\end{corollary}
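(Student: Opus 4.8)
The plan is to specialize \cref{thm:Az=0} to $a=2$ and then rewrite the constrained system $Az=0$ as a condition on $\ker(B)$. First I would unwind the one-hot encoding over the binary alphabet: for each $\kmer$ $v_i$, column $j$ of $V_i$ equals $(1,0)'$ when $v_i[j]=0$ and $(0,1)'$ when $v_i[j]=1$, so $\vct(V_i)\in\mathbb{R}^{2k}$ and $A$ is $(n\times 2k)$. Parsing $z$ into $k$ consecutive pairs $(z_{2j-1},z_{2j})$ as in the theorem, the admissible values of each pair---the differences of two canonical vectors of $\mathbb{R}^2$---are exactly $(0,0)'$, $(1,-1)'$, and $(-1,1)'$. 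The key structural observation is that in every admissible case $z_{2j-1}+z_{2j}=0$.

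Next I would set up the bijection between admissible $z$ and the binary-difference vectors. Define $w\in\mathbb{R}^k$ by $w[j]:=z_{2j}$; equivalently, each admissible pair is $(-w[j],w[j])$ because $z_{2j-1}=-z_{2j}$. As the pair ranges over $(0,0)'$, $(1,-1)'$, $(-1,1)'$, the value $w[j]$ ranges over $0,-1,1$, so $z\mapsto w$ is a bijection from the set of $z$ satisfying the constraints of \cref{thm:Az=0} onto $\{0,\pm1\}^k$, and clearly $z=0$ if and only if $w=0$.

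The heart of the argument is the identity $Az=Bw$. To see it, I would expand the $i$-th entry of $Az$ by splitting the sum over coordinates $j$ according to the value of $v_i[j]$: when $v_i[j]=0$ the pair contributes $z_{2j-1}=-w[j]$, and when $v_i[j]=1$ it contributes $z_{2j}=w[j]$. On the other hand, the $i$-th row of $B$ is $v_i-\bar{v_i}$, whose $j$-th entry is $-1$ when $v_i[j]=0$ and $+1$ when $v_i[j]=1$, so $(Bw)_i$ equals precisely the same sum. Hence, for an admissible $z$, we have $Az=0$ if and only if $Bw=0$.

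Combining these pieces, $0$ is the only admissible solution of $Az=0$ if and only if $0$ is the only $w\in\{0,\pm1\}^k$ with $Bw=0$, i.e. $\ker(B)\cap\{0,\pm1\}^k=\{0\}$; by \cref{thm:Az=0} this is equivalent to $R$ resolving $H_{k,2}$. I expect the main obstacle to be purely the index and sign bookkeeping needed to verify $Az=Bw$ exactly---matching the two entries of each block of $z$ against the single entry of $w$ and checking the sign conventions coming from the $v_i\mapsto v_i-\bar{v_i}$ flip---since everything else is a direct translation of \cref{thm:Az=0}.
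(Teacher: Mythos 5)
Your proof is correct and follows essentially the same route as the paper's: both specialize \cref{thm:Az=0} to $a=2$, use the constraint $z_{2j-1}+z_{2j}=0$ to collapse each pair of variables into a single variable in $\{0,\pm1\}$, and verify that the resulting coupling matrix has rows $v_i-\bar{v_i}$ (the paper does this column-wise, writing $B$'s columns as $A_{2j}-A_{2j-1}$ and identifying rows via Iverson brackets, while you verify $Az=Bw$ entry-wise, but the computation is the same). No gaps; the bijection you set up between admissible $z$ and $w\in\{0,\pm1\}^k$ correctly handles both directions of the equivalence.
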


\begin{proof}
Let
\[A= \left[\begin{array}{ccccc}
\vline & \vline &  & \vline & \vline \\
A_1 & A_2 & \ldots & A_{2k-1} & A_{2k} \\
\vline & \vline & & \vline & \vline
\end{array}\right]\]
be the $(n\times 2k)$ matrix with columns $A_1,\ldots,A_{2k}$ given by \cref{thm:Az=0} for $R$. It follows that $R$ resolves $H_{k,2}$ if and only if $Az=0$, with $z=((x_1,y_1),\ldots,(x_k,y_k))'\in\{0,\pm1\}^{2k}$ and $(x_i+y_i)=0$ for each $i=1,\ldots,k$, has only a trivial solution. Note however that $Az=By$, where
\begin{eqnarray*}
B &:=& 
\left[\begin{array}{ccc}
\vline & & \vline\\
(A_2-A_1) & \ldots & (A_{2k}-A_{2k-1}) \\
\vline & & \vline
\end{array}\right];\\
y &:=& (y_1,\ldots,y_k)'.
\end{eqnarray*}
Therefore $R$ is resolving if and only if $By=0$, with $y\in\{0,\pm1\}^k$, has only a trivial solution. But recall from~\cref{thm:Az=0} that the rows of $A$ are the column-major orderings of the one-hot encodings of the binary $k$-mers in $R$. In particular, using $\llbracket\cdot\rrbracket$ to denote Iverson brackets, we find that the row in $B$ associated with $v\in R$ is:
\[\Big(\llbracket v[1]=1\rrbracket-\llbracket v[1]=0\rrbracket,\ldots,\llbracket v[k]=1\rrbracket-\llbracket v[k]=0\rrbracket\Big)=v-\bar v,\]
from which the corollary follows.
\end{proof}

We can provide an even simpler characterization of sets of $\kmers$ that resolve the hypercube, provided that $1^k:=(1,\ldots,1)$ is one of them. This seemly major assumption is only superficial. Indeed, hypercubes are transitive; that is, given any two binary $\kmers$ there is an automorphism (i.e., a distance preserving bijection $\sigma:\{0,1\}^k\to\{0,1\}^k$) that maps one into the other~\cite[\S3.1]{TilLla19}.  Hence, given any set $R$ of binary $\kmers$ there is an automphism $\sigma$ such that $1^k\in\sigma(R)$. In particular, because $R$ is resolving if and only if $\sigma(R)$ is resolving, one can assume without any loss of generality that $1^k$ is an element of $R$.

\begin{corollary}
Let $R=\{v_1,\ldots,v_n\}$ be a set of $n$ binary $\kmers$ such that $1^k\in R$, and define the $(n\times k)$ matrix with rows
\[C := \left[\begin{array}{c} 
v_1 \\ 
\vdots \\ 
v_n
\end{array}\right].\]
Then, $R$ resolves $H_{k,2}$ if and only if $\hbox{ker}(C)\cap\{0,\pm1\}^k=\{0\}$.
\label{cor:symp_system}
\end{corollary}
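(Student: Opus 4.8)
The plan is to leverage \cref{cor:mat_system}, which already characterizes resolvability of the hypercube through the matrix $B$ whose rows are $v_i-\bar{v_i}$. My strategy is to show that, under the hypothesis $1^k\in R$, the kernels of $B$ and $C$ coincide (not merely their intersections with $\{0,\pm1\}^k$), so that the claimed characterization follows immediately by substituting $C$ for $B$ in the statement of \cref{cor:mat_system}.

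First I would record the elementary identity relating the two matrices. Since each $v_i$ is binary, its flip satisfies $\bar{v_i}=1^k-v_i$, so the $i$-th row of $B$ equals $v_i-\bar{v_i}=2v_i-1^k$. Writing $J$ for the $(n\times k)$ all-ones matrix, this reads $B=2C-J$. Consequently, for any $y\in\mathbb{R}^k$ one has $By=2Cy-(1^k\cdot y)\,\mathbf{1}_n$, where $\mathbf{1}_n$ is the all-ones column vector of length $n$ and $1^k\cdot y=\sum_j y_j$ is the row sum of $y$.

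The crux is to exploit the presence of $1^k$ as a row of $C$, say row $i_0$; the $i_0$-th entry of $Cy$ is then exactly $1^k\cdot y$. I would read off the $i_0$-th coordinate of the identity above to pin down the row sum. If $By=0$, that coordinate gives $2(1^k\cdot y)-(1^k\cdot y)=0$, forcing $1^k\cdot y=0$, whence $2Cy=0$ and $Cy=0$; conversely, if $Cy=0$, the same coordinate yields $1^k\cdot y=0$, whence $By=0$. Thus $\ker(B)=\ker(C)$, and in particular $\ker(B)\cap\{0,\pm1\}^k=\ker(C)\cap\{0,\pm1\}^k$.

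Finally I would invoke \cref{cor:mat_system}: $R$ resolves $H_{k,2}$ if and only if $\ker(B)\cap\{0,\pm1\}^k=\{0\}$, which by the kernel equality is equivalent to $\ker(C)\cap\{0,\pm1\}^k=\{0\}$, as claimed. The only delicate point — and the single place where the hypothesis $1^k\in R$ is indispensable — is the step pinning the row sum $1^k\cdot y$ to zero; without a row of $C$ equal to $1^k$ one obtains merely $2Cy=(1^k\cdot y)\,\mathbf{1}_n$, and the two kernels need not agree. Beyond carefully bookkeeping the all-ones row, I expect no genuine obstacle.
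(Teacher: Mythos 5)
Your proposal is correct and takes essentially the same route as the paper: both rest on the identity $v-\bar v=2v-1^k$ (equivalently $B=2C-J$), use the row corresponding to $1^k$ to force $\langle 1^k,y\rangle=0$ in both directions, conclude that $Bz=0$ if and only if $Cz=0$ (i.e.\ $\ker(B)=\ker(C)$), and then invoke \cref{cor:mat_system}. Your write-up is, if anything, slightly more explicit than the paper's in reading off the distinguished coordinate for each implication.
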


\begin{proof}
Note that for all binary $\kmer$s $v$: $(v+\bar v)=1^k$; in particular, $(v-\bar v)=(2v-1^k)$. Hence, if $B$ is as given in~\cref{cor:symp_system} and $C$ as defined above then
\[Bz=0\hbox{ if and only if }Cz=\langle 1^k,z\rangle\left[\begin{array}{c}1/2\\\vdots\\1/2\end{array}\right].\]
But, because $1^k\in R$ and $2\cdot1^k-1^k=1^k$, one of the entries in $Bz$ equals $\langle 1^k,z\rangle$. Since the entries in $Cz$ are proportional to $\langle 1^k,z\rangle$, $Bz=0$ if and only if $Cz=0$, from which the corollary follows.
\end{proof}

\section{Polynomial Roots Formulation}
\label{sec:grobner}

In this section, we express the constraints of the linear system in \cref{thm:Az=0} as roots of a multi-variable polynomial system, and we reveal various properties about this system which can drastically reduce the complexity of determining whether a subset of nodes resolves or not $\Hka$.

In what follows, for any given non-empty set $P$ of polynomials in a possibly multi-variable $z$, $\{P=0\}$ denotes the set of $z$'s such that $p(z)=0$, for each $p\in P$. Unless otherwise stated, we assume that $z$ has dimension $ka$, i.e. $z=(z_1,\ldots,z_{ka})$.

Consider the polynomial sets
\begin{eqnarray}
P_1 &:=& \Big\{z_i^3-z_i,\hbox{ for $i=1,\ldots,ka$}\Big\};\notag \\
P_2 &:=& \left\{\sum_{j=(i-1)a+1}^{ia}z_j,\hbox{ for $i=1,\ldots,k$}\right\};\\
P_3 &:=& \left\{\Big(2-\sum_{j=(i-1)a+1}^{ia}z_j^2\Big)\cdot\sum_{j=(i-1)a+1}^{ia}z_j^2,\hbox{ for $i=1,\ldots,k$}\right\}. \notag
\label{eq:P}
\end{eqnarray}
Our first result characterizes the constraints of the linear system in \cref{thm:Az=0} in terms of the roots of the above polynomials. Ahead, unless otherwise stated:
\begin{equation}
P := (P_1\cup P_2\cup P_3).
\label{def:P}
\end{equation}

\begin{lemma}
$z\in\{P=0\}$ if and only if when parsing $z$ into $k$ consecutive but non-overlapping subvectors of dimension $a$, each subvector is the difference of two canonical vectors.
\label{lem:polsys}
\end{lemma}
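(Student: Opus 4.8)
The plan is to treat the three polynomial families separately, decode what each vanishing condition imposes, and then recombine them one block at a time to recover the description of a difference of two canonical vectors. The key structural observation I would exploit first is that the whole system decouples over the $k$ blocks of size $a$: the constraints in $P_1$ act on each coordinate independently, while those in $P_2$ and $P_3$ are indexed by the block $i$ and involve only the coordinates $z_{(i-1)a+1},\ldots,z_{ia}$. Consequently it suffices to fix one block, write $w=(z_{(i-1)a+1},\ldots,z_{ia})\in\mathbb{R}^a$, and prove that $w$ satisfies the relevant members of $P_1,P_2,P_3$ if and only if $w=e_p-e_q$ for some canonical vectors $e_p,e_q\in\mathbb{R}^a$.

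Next I would decode each constraint. Since $z_i^3-z_i=z_i(z_i-1)(z_i+1)$, the family $P_1$ vanishes exactly when every coordinate lies in $\{-1,0,1\}$. The member of $P_2$ for block $i$ vanishes exactly when $\sum_j w_j=0$, and the member of $P_3$, being the product $\bigl(2-\sum_j w_j^2\bigr)\cdot\sum_j w_j^2$, vanishes exactly when $\sum_j w_j^2\in\{0,2\}$. Hence, restricted to a single block, $\{P=0\}$ is precisely the set of $w$ with $w\in\{-1,0,1\}^a$, $\sum_j w_j=0$, and $\sum_j w_j^2\in\{0,2\}$.

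It then remains to check that these three scalar conditions characterize differences of canonical vectors, which I expect to be a short and obstacle-free verification. For the forward direction, $w\in\{-1,0,1\}^a$ gives $w_j^2\in\{0,1\}$, so $\sum_j w_j^2$ simply counts the nonzero coordinates: the value $0$ forces $w=0=e_p-e_p$, while the value $2$ forces exactly two nonzero coordinates, and $\sum_j w_j=0$ then makes one of them $+1$ and the other $-1$, i.e.\ $w=e_p-e_q$ with $p\ne q$. The reverse direction is immediate, since $e_p-e_p=0$ and each $e_p-e_q$ with $p\ne q$ has entries in $\{-1,0,1\}$, coordinate sum $0$, and sum of squares $0$ or $2$ respectively. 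The only point needing attention is to handle the degenerate difference $e_p-e_p=0$ on the same footing as the nondegenerate ones; this is exactly why $P_3$ is written to permit $\sum_j w_j^2=0$ in addition to $\sum_j w_j^2=2$, and keeping that case explicit is what makes the equivalence exact rather than merely one-directional.
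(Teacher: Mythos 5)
Your proposal is correct and follows essentially the same route as the paper's proof: decode $P_1$, $P_2$, $P_3$ as entries in $\{0,\pm1\}$, zero coordinate sum per block, and sum of squares in $\{0,2\}$ per block, then combine blockwise to conclude each subvector is a difference of canonical vectors. Your write-up is somewhat more careful than the paper's (you decode $P_3$ on its own before combining it with $P_1$, and you treat the degenerate case $e_p-e_p=0$ explicitly), but the underlying argument is the same.
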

\begin{proof}
The polynomials in $P_1$ enforce that each entry in $z$ must be a ${-1}$, $0$, or $1$, while the polynomials in $P_2$ enforce that there is a $(-1)$ for every $1$ in each subvector of $z$. Finally, the polynomials in $P_3$ enforce that each subvector of $z$ has exactly two non-zero entries or no non-zero entries. Altogether, $z\in P$ if and only if each subvector is identically zero, or it has exactly one 1 and one $(-1)$ entry and all other entries vanish, i.e. each subvector of $z$ is the difference of two canonical vectors in $\mathbb{R}^a$.
\end{proof}

The following is now an immediate consequence of this lemma and~\cref{thm:Az=0}.

\begin{corollary}
Let $R$ be a set of nodes in $\Hka$ and $A$ the matrix given by equation~\cref{def:A}. Then, $R$ resolves $\Hka$ if and only if $\hbox{ker}(A)\cap\{P=0\}=\{0\}$.
\label{cor:KerCapP=0}
\end{corollary}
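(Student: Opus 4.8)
The plan is to combine the two preceding results directly. \cref{thm:Az=0} asserts that $R$ resolves $\Hka$ if and only if the only $z$ of dimension $ka$ satisfying $Az=0$ \emph{and} the structural constraint (that each of the $k$ consecutive length-$a$ subvectors of $z$ is a difference of two canonical vectors) is $z=0$. \cref{lem:polsys} provides a purely algebraic restatement of that very structural constraint: the set of $z$ satisfying it is exactly $\{P=0\}$. So the task is essentially a translation between the geometric/combinatorial description of the feasible vectors and their description as a common root set of the polynomial system $P$.

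First I would fix notation and recall that $\hbox{ker}(A)=\{z\in\mathbb{R}^{ka} : Az=0\}$, so that $\hbox{ker}(A)\cap\{P=0\}$ is precisely the set of vectors $z$ that \emph{simultaneously} lie in the null space of $A$ and satisfy $p(z)=0$ for every $p\in P$. Next, I would invoke \cref{lem:polsys} to identify $\{P=0\}$ with the set of all $z$ whose $k$ length-$a$ subvectors are each a difference of two canonical vectors in $\mathbb{R}^a$. Substituting this identification, the intersection $\hbox{ker}(A)\cap\{P=0\}$ becomes exactly the solution set of the constrained linear system appearing in the statement of \cref{thm:Az=0}.

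With that identification in hand, the corollary is immediate: \cref{thm:Az=0} says $R$ resolves $\Hka$ if and only if this constrained solution set reduces to $\{0\}$, which is now literally the assertion $\hbox{ker}(A)\cap\{P=0\}=\{0\}$. The only point worth checking carefully is that $0$ does indeed lie in both sets, so the equality is the meaningful condition rather than the stronger (and false) claim that the intersection is empty: $0\in\hbox{ker}(A)$ trivially, and the zero subvector is the difference of a canonical vector with itself, so $0\in\{P=0\}$ as well. Hence the right-hand side is never empty, and the content of the condition is genuinely that $0$ is the \emph{unique} common element.

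I do not anticipate any real obstacle here, since the result is a formal consequence of chaining \cref{thm:Az=0} with \cref{lem:polsys}; the proof is a one-line substitution once the two characterizations are lined up. The only mild subtlety is being explicit that the structural constraint in \cref{thm:Az=0} and the root condition $\{P=0\}$ describe the \emph{same} set of vectors, which is exactly what \cref{lem:polsys} guarantees, and the careful handling of the base point $0$ in the intersection as noted above.
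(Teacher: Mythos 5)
Your proposal is correct and matches the paper exactly: the paper states this corollary as an immediate consequence of \cref{thm:Az=0} and \cref{lem:polsys}, which is precisely the chaining you carry out. Your extra remark that $0$ belongs to both $\hbox{ker}(A)$ and $\{P=0\}$ (so the condition is uniqueness rather than emptiness) is a sensible clarification consistent with the paper's later discussion of excluding the trivial root.
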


Our focus in what remains of this section is to better characterize the non-trivial roots of the polynomial system $\{P=0\}$. To do so, we rely on the concepts of polynomial ideals and (reduced) Gr\"obner bases, and the following fundamental result from algebraic geometry. For a primer to these and other concepts on which our results rely see~\cref{app:1}.

\begin{theorem} (Hilbert's Weak Nullstellensatz~\cite[\S4.1]{Cox_Little_OShea:2015}.)
\label{thm:weak_null}
For any non-empty finite set of polynials $P$, $\{P=0\} = \emptyset$ if and only if $\{1\}$ is the reduced Gr{\"o}bner basis of $I(P)$, the ideal generated by $P$.
\end{theorem}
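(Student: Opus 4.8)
The plan is to recognize this as the classical Weak Nullstellensatz phrased in Gr\"obner-basis language, and to prove it by routing the claimed equivalence through the single intermediate condition $1\in I(P)$. Throughout, one must interpret $\{P=0\}$ over an algebraically closed field $F$; this hypothesis is essential, since over $\mathbb{R}$ the ideal generated by $z_1^2+1$ is proper yet has empty real zero set. Thus I would separately establish the two equivalences
\[\{P=0\}=\emptyset \ \Longleftrightarrow\ 1\in I(P) \ \Longleftrightarrow\ \{1\}\text{ is the reduced Gr\"obner basis of }I(P),\]
and then compose them.

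The right-hand equivalence is the purely algebraic and easier half. For a fixed monomial order the reduced Gr\"obner basis of an ideal is unique, and the leading terms of any Gr\"obner basis generate the leading-term ideal of $I(P)$. If $1\in I(P)$, then $I(P)$ is the whole ring, so the constant monomial $1$ lies in its leading-term ideal; the only minimal monomial generator is then $1$ itself, which forces the reduced Gr\"obner basis to be exactly $\{1\}$. Conversely, if $\{1\}$ is the reduced Gr\"obner basis, then trivially $1\in I(P)$. Hence this half is immediate from the definition and the uniqueness of reduced Gr\"obner bases.

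The left-hand equivalence is the substantive content: over $F$ algebraically closed, $\{P=0\}=\emptyset$ if and only if $1\in I(P)$. One direction is trivial, since if $1\in I(P)$ then a common zero would have to annihilate the constant $1$, which is impossible, so $\{P=0\}=\emptyset$. For the contrapositive of the other direction I would assume $I(P)$ is a proper ideal and produce a common zero: being proper, $I(P)$ is contained in some maximal ideal $\mathfrak{m}$, and the quotient $L:=F[z_1,\ldots,z_{ka}]/\mathfrak{m}$ is a field that is finitely generated as an $F$-algebra.

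The crux, and the step I expect to be the main obstacle, is Zariski's lemma: a field that is finitely generated as an algebra over a field $F$ is necessarily a finite, hence algebraic, extension of $F$. Granting this, $L$ is algebraic over $F$; since $F$ is algebraically closed, the quotient map restricts to an isomorphism $L\cong F$. The images $\xi_i\in F$ of the coordinate variables $z_i$ then assemble into a point $(\xi_1,\ldots,\xi_{ka})\in F^{ka}$ that annihilates every element of $\mathfrak{m}$, and a fortiori every polynomial in $P\subseteq I(P)\subseteq\mathfrak{m}$, whence $\{P=0\}\neq\emptyset$. All the genuine commutative-algebra content is concentrated in Zariski's lemma, which I would in turn prove via Noether normalization (or the Artin--Tate lemma); everything surrounding it is bookkeeping.
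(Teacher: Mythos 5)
Your proof is correct, but note that the paper never actually proves this statement---it is imported with a citation to \cite[\S4.1]{Cox_Little_OShea:2015}---so the comparison is with the paper's framing and the cited text rather than with an in-paper argument. Your factorization through the intermediate condition $1\in I(P)$ is exactly the right decomposition: the equivalence between $1\in I(P)$ and the reduced Gr\"obner basis being $\{1\}$ follows from uniqueness of reduced bases just as you argue, and all of the substance sits in the classical Weak Nullstellensatz, which you establish by the standard maximal-ideal argument with Zariski's lemma as the declared crux. (The cited textbook proves that half differently, by induction on the number of variables after a generic linear change of coordinates; your route is shorter if one is willing to quote Zariski's lemma or Noether normalization, while theirs is more self-contained.) Your caveat about algebraic closedness is a genuine and necessary correction rather than bookkeeping: read over $\mathbb{R}$, the statement is false, e.g.\ $P=\{z_1^2+1\}$ has empty real zero set while its reduced Gr\"obner basis is $\{z_1^2+1\}\neq\{1\}$. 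The paper handles this only implicitly: its appendix declares $\{P=0\}$ to consist of \emph{complex} common roots, and the way the theorem is invoked in \cref{cor:KerCapP=0} is legitimate only because the polynomials $z_i^3-z_i\in P_1$ force every complex root of the systems considered there to lie in $\{0,\pm1\}^{ka}$, so the real and complex solution sets coincide for those particular systems. Making that reconciliation explicit, as you began to with your $\mathbb{R}$ counterexample, is the one addition that would make your write-up airtight in the context where the paper uses the theorem.
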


Define for each $i=1,\ldots,k$:
\begin{eqnarray}
\label{def:Bi}
B_i &:=& \Big\{z_j^3-z_j,\hbox{ for }j=(i-1)a+1,\ldots,ia\Big\}\\
&&\qquad\bigcup\left\{\sum_{j=(i-1)a+1}^{ia}z_j,\Big(2-\sum_{j=(i-1)a+1}^{ia}z_j^2\Big)\cdot\sum_{j=(i-1)a+1}^{ia}z_j^2\right\}.
\notag
\end{eqnarray}
Observe that $B_i$ is a set of polynomials in $(z_{(i-1)a+1},\ldots,z_{ia})$, i.e. the $i$-th subvector of $z$; in particular, each of these polynomials may be regarded as a function of $z$, and $B_1,\ldots,B_k$ partition $P$, i.e. $P=\sqcup_{i=1}^kB_i$. Accordingly, we call $B_i$ the $i$-th \underline{b}lock of $P$, and denote the reduced Gr\"obner basis of $B_i$ as $G_i$. The computational advantage of these observations is revealed by the following results.

\begin{lemma}
$G=\cup_{i=1}^kG_i$ is the reduced Gr\"obner bases of $P$ in equation~\cref{def:P}. Furthermore, $G_i$ may be obtained from $G_1$ using the change of variables:
\begin{equation}
(z_1,\ldots,z_a)\longrightarrow(z_{(i-1)a+1},\ldots,z_{ia}).
\label{ide:varchange}
\end{equation}
\label{lem:groeb_block}
\end{lemma}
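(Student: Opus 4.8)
The plan is to prove the two assertions separately, in both cases exploiting that the blocks $B_1,\ldots,B_k$ live in pairwise disjoint sets of variables. Throughout I would fix a single monomial order $\prec$ on $\mathbb{R}[z_1,\ldots,z_{ka}]$ whose restriction to the variables of each block is the order used to compute the corresponding $G_i$; for concreteness one may take the lexicographic order with $z_1\succ z_2\succ\cdots\succ z_{ka}$, which enjoys the two invariance properties the argument needs. I would also record at the outset that, since $P=\sqcup_{i=1}^k B_i$, the ideal satisfies $I(P)=\sum_{i=1}^k I(B_i)$.

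For the first assertion I would show that $G=\cup_{i=1}^k G_i$ is a Gr\"obner basis via Buchberger's criterion, checking that every S-polynomial $S(g,g')$ with $g,g'\in G$ reduces to $0$ modulo $G$. If $g,g'$ lie in the same block $G_i$, the reduction to $0$ is inherited from the fact that $G_i$ is already a Gr\"obner basis of $B_i$, the variables of the other blocks playing no role. If $g\in G_i$ and $g'\in G_j$ with $i\ne j$, then $\hbox{LM}(g)$ and $\hbox{LM}(g')$ involve disjoint variables and are therefore coprime, so by the relatively-prime-leading-terms criterion~\cite{Cox_Little_OShea:2015} their S-polynomial reduces to $0$ automatically. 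To upgrade this to \emph{the} reduced Gr\"obner basis I would verify reducedness: each $g\in G_i$ is monic because $G_i$ is reduced, and for the no-divisibility condition I would note that a monomial of $g\in G_i$ involves only block-$i$ variables, whereas $\hbox{LM}(g')$ for $g'\in G_j$, $j\ne i$, involves only block-$j$ variables and is nonconstant (the ideal is proper, since $0\in\{P=0\}$ gives $1\notin G$), hence cannot divide any monomial of $g$. Divisibility can thus occur only within a single block, where it is already excluded by reducedness of each $G_i$; by uniqueness of the reduced Gr\"obner basis, $G$ is it.

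For the second assertion I would introduce the ring isomorphism $\phi_i$ induced by the renaming $z_\ell\mapsto z_{(i-1)a+\ell}$ for $\ell=1,\ldots,a$. Comparing the defining formulas for $B_1$ and $B_i$ termwise shows $\phi_i(B_1)=B_i$, hence $\phi_i\big(I(B_1)\big)=I(B_i)$. Because the chosen order $\prec$ is symmetric under this renaming, i.e.\ its restriction to block $i$ is the $\phi_i$-image of its restriction to block $1$, the map $\phi_i$ carries leading monomials to leading monomials. Since the reduced Gr\"obner basis is determined uniquely by the ideal together with the monomial order, $\phi_i$ must send the reduced Gr\"obner basis of $B_1$ to that of $B_i$, giving $G_i=\phi_i(G_1)$, which is exactly the claimed change of variables~\cref{ide:varchange}.

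I expect the main obstacle to be the careful bookkeeping of the monomial order rather than any single deep step: one must confirm that the \emph{same} global order restricts correctly to each block (so the within-block S-polynomial reductions in the first part are legitimate) while being simultaneously invariant under the block-renaming isomorphisms $\phi_i$ (so the second part goes through). Once the order is pinned down with these two compatibility properties, the coprimality criterion and the uniqueness of reduced Gr\"obner bases do all the substantive work.
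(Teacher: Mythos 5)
Your proof is correct, but it takes a genuinely different route from the paper's. For the first claim, the paper simply cites \cite[Proposition 2]{Cox_Little_OShea:2015} for the case $k=2$ (no variables shared between blocks) and then inducts on $k$; you instead re-prove that disjoint-variable proposition from scratch: Buchberger's criterion, with the coprime-leading-monomial criterion disposing of all cross-block S-polynomials, followed by an explicit check of the two reducedness conditions --- where your observation that properness of each $I(B_j)$ (guaranteed by the common root $0$) forces every $\hbox{LM}(g')$ to be nonconstant is exactly the point needed to rule out cross-block divisibility --- and finally uniqueness of the reduced Gr\"obner basis. For the second claim the paper argues operationally: Buchberger's algorithm and the reduction algorithm are built on polynomial division, so running them on $B_i$ is the same computation as running them on $B_1$ after renaming variables. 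You argue declaratively: the renaming $\phi_i$ is a ring isomorphism carrying $I(B_1)$ to $I(B_i)$, compatible with the lexicographic order, hence it maps a reduced Gr\"obner basis to a reduced Gr\"obner basis, and uniqueness forces $G_i=\phi_i(G_1)$. Your route costs more writing but is self-contained and arguably tighter: the uniqueness argument avoids any appeal to how a particular algorithm executes (the paper's ``builds upon polynomial division'' step is somewhat informal), and your explicit insistence on a single global monomial order whose block restrictions are simultaneously lex and renaming-invariant makes precise a compatibility point that the paper leaves implicit.
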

\begin{proof}
The case with $k=2$ follows from~\cite[Proposition 2]{Cox_Little_OShea:2015} due to the fact that no variable and hence no polynomial is shared between the blocks of $P$. A straightforward inductive argument in $k\ge2$ then shows that $\cup_{i=1}^kG_i$ is the reduced Gr\"obner bases of $P$. Finally, note that $B_1$ is up to the change of variables in equation~(\ref{ide:varchange}) identical to $B_i$; in particular, since Buchberger's algorithm (\cref{algo:1}) and the Gr\"obner basis reduction algorithm (\cref{algo:2}) build upon polynomial division, the reduced Gr\"obner bases of $B_i$ may be obtained from that of $B_1$ using the same change of variables.
\end{proof}

\begin{lemma}
The reduced Gr\"obner bases of $B_1$ under the lexicographic ordering is
\begin{equation}
G_1=\left\{\sum\limits_{i=1}^a z_i\right\}\bigcup_{2\le i\le a}\{z_i^3-z_i\}\bigcup_{2\le i<j\le a}\{z_i^2z_j+z_iz_j^2\}\bigcup_{2\le i<j<\ell\le a}\{z_iz_jz_\ell\}.
\end{equation}
\label{lem:G1explicit}
\end{lemma}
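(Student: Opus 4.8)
The plan is to avoid a direct run of Buchberger's algorithm --- which would require reducing all $S$-polynomials among the many elements of the proposed basis --- and instead certify $G_1$ by a dimension count in the quotient ring. Throughout, write $\mathfrak{a}:=\langle B_1\rangle$ for the ideal generated by $B_1$, and recall from \cref{lem:polsys} (restricted to a single block) that the variety $V(\mathfrak{a})$ consists exactly of the differences $e_p-e_q$ of two canonical vectors in $\mathbb{R}^a$: the zero vector together with the $a(a-1)$ vectors having a single $+1$ and a single $-1$, for a total of $N:=a^2-a+1$ points. Because $B_1$ contains the squarefree univariate polynomial $z_j^3-z_j=z_j(z_j-1)(z_j+1)$ for every $j$, Seidenberg's radicality criterion shows $\mathfrak{a}$ is radical; moreover every point of $V(\mathfrak{a})$ has coordinates in $\{0,\pm1\}$, so the complex and real varieties coincide and $\dim\mathbb{R}[z]/\mathfrak{a}=|V(\mathfrak{a})|=N$.

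First I would establish the inclusion $\langle G_1\rangle\subseteq\mathfrak{a}$. Since $\mathfrak{a}$ is radical it equals $I(V(\mathfrak{a}))$, so it suffices to check that each element of $G_1$ vanishes on every point of $V(\mathfrak{a})$. This is immediate: $\sum_i z_i$ vanishes because every point has coordinate sum zero; $z_i^3-z_i$ vanishes because each coordinate lies in $\{0,\pm1\}$; $z_i^2z_j+z_iz_j^2=z_iz_j(z_i+z_j)$ vanishes because the factor $z_iz_j$ is zero unless positions $i,j$ carry the two nonzero entries, in which case $z_i+z_j=0$; and $z_iz_jz_\ell$ vanishes because at most two coordinates are nonzero.

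Next I would read off the leading terms under lex with $z_1\succ\cdots\succ z_a$ --- namely $z_1$, the cubics $z_i^3$, the terms $z_i^2z_j$ (since $i<j$ makes $z_i^2z_j\succ z_iz_j^2$), and the monomials $z_iz_jz_\ell$ --- and enumerate the standard monomials divisible by none of them. The leader $z_1$ forces exponent $0$ in $z_1$; the cubic leaders cap each remaining exponent at $2$; the triple-product leaders allow at most two distinct variables; and the $z_i^2z_j$ leaders (square on the smaller index) rule out degree-$3$ terms except those $z_p^2z_q$ whose square sits on the larger index. Counting yields $1$ in degree $0$, $a-1$ in degree $1$, $(a-1)+\binom{a-1}{2}$ in degree $2$, and $\binom{a-1}{2}$ in degree $3$, summing to $1+2(a-1)+(a-1)(a-2)=N$. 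Since the standard monomials of $\langle\mathrm{LT}(G_1)\rangle$ always span $\mathbb{R}[z]/\langle G_1\rangle$ via division by $G_1$, we get $\dim\mathbb{R}[z]/\langle G_1\rangle\le N$; combined with the surjectivity of the quotient map $\mathbb{R}[z]/\langle G_1\rangle\to\mathbb{R}[z]/\mathfrak{a}$ (from $\langle G_1\rangle\subseteq\mathfrak{a}$) and $\dim\mathbb{R}[z]/\mathfrak{a}=N$, this inequality is an equality. The pinch forces simultaneously $\langle G_1\rangle=\mathfrak{a}$ and $\langle\mathrm{LT}(G_1)\rangle=\mathrm{LT}(\mathfrak{a})$, i.e.\ $G_1$ is a Gr\"obner basis of $\mathfrak{a}$.

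Finally I would verify reducedness directly: all leading coefficients are $1$; no leading term divides another (e.g.\ $z_i^2z_j\mid z_{i'}^2z_{j'}$ forces equal pairs, and no cubic, square-times-linear, or squarefree-triple leader divides a leader of another shape); and each non-leading monomial --- $z_2,\dots,z_a$ in the sum, $z_i$ in $z_i^3-z_i$, and $z_iz_j^2$ in $z_i^2z_j+z_iz_j^2$ --- is itself standard, hence divisible by no other leading term. I expect the main obstacle to be the bookkeeping in the standard-monomial count, in particular tracking that $z_i^2z_j$ is excluded only when the square sits on the smaller index, which is precisely what makes the degree-$2$ family $\{z_iz_j\}$ and the degree-$3$ family $\{z_p^2z_q\}$ each contribute $\binom{a-1}{2}$; a minor secondary subtlety is justifying the Nullstellensatz dimension count over $\mathbb{R}$, which the radicality argument and the coincidence of the real and complex varieties resolve.
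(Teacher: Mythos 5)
Your proof is correct, but it takes a genuinely different route from the paper's. The paper invokes uniqueness of reduced Gr\"obner bases and sets out to verify the defining properties directly: ideal equality $I(G_1)=I(B_1)$, Buchberger's criterion (every S-polynomial reduces to zero by $G_1$), monic leading coefficients, and the reducedness condition; it proves ideal equality by a case analysis showing the two varieties coincide, explicitly omits the ``tedious but otherwise straightforward'' S-polynomial and reducedness checks, and case-splits on $a$, delegating $a\in\{2,3,4\}$ to machine computation. You avoid Buchberger's criterion altogether: Seidenberg radicality of $\langle B_1\rangle$ (from the squarefree cubics $z_j^3-z_j$), the point count $N=a^2-a+1$ of the variety (real and complex varieties agree since all coordinates lie in $\{0,\pm1\}$), the count of exactly $N$ standard monomials of $\langle \mathrm{LT}(G_1)\rangle$, and the pinch $N\le\dim\mathbb{R}[z]/\langle G_1\rangle\le N$ together force $\langle G_1\rangle=\langle B_1\rangle$ and $\langle \mathrm{LT}(G_1)\rangle=\mathrm{LT}(\langle B_1\rangle)$, i.e.\ the Gr\"obner property, after which reducedness is a direct inspection. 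Your route buys three things the paper's argument lacks: it supplies the radicality needed to pass from equality of varieties to equality of ideals (the paper's step ``it suffices to show $\{G=0\}=\{B_1=0\}$'' is not valid without it); it never needs the omitted S-polynomial reductions, which are the bulk of the work in the paper's scheme; and it treats all $a\ge 2$ uniformly, whereas the paper's split assumes $a>5$ after checking only $a\in\{2,3,4\}$. What the paper's approach buys in exchange is elementarity: it stays entirely within the definitions of its Appendix A (division, S-polynomials, uniqueness of the reduced basis), with no appeal to zero-dimensional ideal theory, Macaulay-type dimension counts, or the Nullstellensatz machinery your argument rests on.
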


\begin{proof}
Let $G$ be the set of polynomials on the right-hand side above. Since $G$ depends on $a$ but not on the parameter $k$ of $\Hka$, and the identity for $a\in\{2,3,4\}$ can be checked using algorithms~\ref{algo:1} and~\ref{algo:2}, without loss of generality we assume in what follows that $a>5$.

Since reduced Gr\"obner basis are unique, it suffices to show that (i) $I(G)=I(B_1)$; and that for all $f,g\in G$: (ii) the reduction of $\hbox{Spoly}(f,g)$ by $G$ is 0; (iii) $LC(f)=1$; and (iv) if $f\in G\setminus\{g\}$ then no monomial of $f$ is divisible by $LM(g)$. We omit the tedious but otherwise straightforward verification of properties (ii) and (iv). Since property (iii) is trivially satisfied, it only remains to verify property (i).

To prove $I(G) = I(B_1)$, it suffices to show that $\{G=0\}=\{B_1=0\}$. Indeed, the polynomials of the form $z_iz_jz_\ell$ imply that if $z\in\{G=0\}$ then $(z_2,\ldots,z_{a-1})$ has at most two non-zero coordinates. In the case two of these coordinates are non-zero, say $z_i$ and $z_j$, the polynomials $z_i^3-z_i=z_i(z_i-1)(z_i+1)$, $z_j^3-z_j=z_j(z_j-1)(z_j+1)$, and $z_i^2z_j+z_iz_j^2=z_iz_j(z_i+z_j)$ imply that $(z_i,z_j)=(1,-1)$ or $(z_i,z_j)=(-1,1)$; in particular, because we must have $\sum_{\ell=1}^az_\ell=0$, $z_1=0$. Instead, if exactly one of the coordinates in $(z_2,\ldots,z_{a-1})$ is non-zero, say  $z_j$, then the polynomial $\sum_{\ell=1}^az_\ell$ together with $z_j^3-z_j$ imply that $(z_1,z_j)=(1,-1)$ or $(z_1,z_j)=(-1,1)$. Finally, if $(z_2,\ldots,z_{a-1})=0$ then the polynomial $\sum_{\ell=1}^az_\ell$ implies that $z_1=0$. In all of these three exhaustive cases, it follows that $(z_1,\ldots,z_a)$ is identically zero, or it has exactly one 1 and one (-1) coordinate and all other coordinates vanish; in other words, $(z_1,\ldots,z_a)$ is a difference of two canonical vectors in $\mathbb{R}^a$. Since this is precisely the constraint imposed on this subvector of $z$ by the polynomials in $B_1$, we obtain that $\{G=0\}=\{B_1=0\}$ i.e. $I(G)=I(B_1)$.
\end{proof}

A minor issue for using the Weak Nullstellensatz in our setting is that the polynomials in $P$ have no constant terms; in particular, $0\in\{P=0\}$. To exclude this trivial root, observe that if $z\in P$ then $\sum_{j=(i-1)a+1}^{ia}z_j^2\in\{0,2\}$, for each $i=1,\ldots,k$. As a result, if $z$ is a non-trivial root of $\{P=0\}$ then $\sum_{j=1}^{ka}z_j^2=2i$ for some $i$. This motivates to introduce the auxiliary polynomial:
\begin{equation}
f(z):=\Big(\sum_{j=1}^{ka}z_j^2\Big),
\label{def:f(z)}
\end{equation}
so that $R$ resolves $\Hka$ if and only if $\hbox{ker}(A)\cap\{P=0\}\cap\{f-2i=0\}=\emptyset$ for all $i=1,\ldots,k$.

\begin{lemma}
Consider a (finite) reduced Gr\"obner basis $G \neq \{1\}$ and a polynomial $f$. If $f \xrightarrow{G} r$ then, for each $c\in\mathbb{R}$, $(f+c) \xrightarrow{G} (r+c)$.
\end{lemma}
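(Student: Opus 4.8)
The plan is to exploit the single structural consequence of the hypothesis $G\neq\{1\}$: since $G$ is a \emph{reduced} Gr\"obner basis different from $\{1\}$, the ideal $I(G)$ is proper, so $1\notin I(G)$ and hence no element of $G$ is a nonzero constant (and, being reduced, $0\notin G$). Consequently every leading monomial $LM(g)$ with $g\in G$ has positive degree. The payoff is immediate: the constant monomial $1$ is not divisible by any $LM(g)$, so a constant term can never be selected as the term that a reduction step eliminates. This is the only place the assumption $G\neq\{1\}$ enters, and it is precisely what makes the added constant $c$ inert throughout the reduction.

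With this in hand, I would run the two reductions in parallel and track the invariant that the current polynomials differ by exactly $c$. Fix any reduction sequence realizing $f\xrightarrow{G}r$; each step replaces the current polynomial $h$ by $h-\frac{\alpha x^\beta}{LT(g)}\,g$, where $\alpha x^\beta$ is a term of $h$ whose monomial $x^\beta$ is divisible by $LM(g)$. I claim the same sequence applies verbatim to $f+c$. Proceeding by induction, suppose the $f$-reduction has reached $h$ while the $(f+c)$-reduction has reached $h+c$. Because $h$ and $h+c$ share the same non-constant terms, and because—by the previous paragraph—only non-constant terms can be reducible, the two polynomials have identical sets of reducible terms, so the same eliminating term $\alpha x^\beta$ and divisor $g$ are available in both. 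Subtracting the \emph{same} multiple $\frac{\alpha x^\beta}{LT(g)}\,g$ from each preserves the difference: the results are $h-\frac{\alpha x^\beta}{LT(g)}g$ and $\left(h-\frac{\alpha x^\beta}{LT(g)}g\right)+c$. The invariant survives even when $g$ carries a constant term (which would alter the constant part of $h$), precisely because that identical multiple is subtracted on both sides.

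Finally I would close the argument using uniqueness of normal forms. When the $f$-reduction terminates at $r$, the parallel $(f+c)$-reduction stands at $r+c$. Since $r$ is a normal form it has no reducible terms, and as $r+c$ has the same non-constant support as $r$, it too has no reducible terms; thus $r+c$ is a normal form of $f+c$ with respect to $G$. Because $G$ is a Gr\"obner basis, the normal form is independent of the reduction path, so $r+c$ is \emph{the} remainder and $(f+c)\xrightarrow{G}(r+c)$, as claimed. The only delicate point to get right is the bookkeeping in the inductive step—verifying that the subtracted multiples coincide and that the constant term is genuinely never chosen for elimination—rather than any deep algebraic difficulty; the result is essentially a formalization of the observation that reduction does not ``see'' constant terms when $G$ has no constant generators.
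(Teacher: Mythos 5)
Your proposal is correct and follows essentially the same route as the paper's proof: both hinge on the observation that $G\neq\{1\}$ (together with reducedness) forces $G$ to contain no nonzero constants, so no leading monomial can divide a constant term, and both close by invoking uniqueness of reductions modulo a Gr\"obner basis. Your parallel-run induction merely makes explicit the additivity step that the paper's terser argument (``$c\xrightarrow{G}c$, $f\xrightarrow{G}r$, hence $(f+c)\xrightarrow{G}(r+c)$ by uniqueness'') leaves implicit, which is a fine, if more verbose, way to fill in the same idea.
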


\begin{proof}
Let $G=\{g_1,\ldots,g_n\}$. Without loss of generality fix a constant $c\ne0$. Note that $G$ contains no constant polynomial (except for $0$) because $G \neq \{1\}$ hence $1\notin G$. As a result, the leading monomial of each $g_i$ does not divide $c$, hence $c\xrightarrow{G}c$. Since $f \xrightarrow{G} r$, and reductions by a Gr\"obner basis are unique, $(f+c) \xrightarrow{G} (r+c)$ as claimed. 
\end{proof}

The following is now a direct consequence of the lemma.

\begin{corollary}
Let $G$ be the reduced Gr\"obner basis of $P$ in equation~(\ref{def:P}). If $f$ is as defined in~\cref{def:f(z)} and $f \xrightarrow{G} r$ then, for each $i = 1,2,\ldots,k$, $(f-2i)\xrightarrow{G} (r-2i)$.
\label{cor:rem}
\end{corollary}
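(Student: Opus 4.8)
The plan is to derive this corollary as an immediate application of the lemma that precedes it. The lemma states that for a reduced Gr\"obner basis $G\neq\{1\}$ and any polynomial $f$, if $f\xrightarrow{G}r$, then $(f+c)\xrightarrow{G}(r+c)$ for every constant $c\in\mathbb{R}$. So the only work is to instantiate this general statement with the specific quantities named in the corollary and to verify that the lemma's hypotheses are actually met.

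First I would check that $G$, the reduced Gr\"obner basis of $P$, is not equal to $\{1\}$, since that is the standing hypothesis of the lemma. This is immediate from Hilbert's Weak Nullstellensatz (\cref{thm:weak_null}): because $0\in\{P=0\}$ (each polynomial in $P$ has no constant term), the solution set $\{P=0\}$ is non-empty, so by the contrapositive of the Nullstellensatz the reduced Gr\"obner basis of $I(P)$ cannot be $\{1\}$. Hence the lemma applies to $G$.

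Next I would simply specialize the constant. Given that $f\xrightarrow{G}r$, I apply the lemma with the choice $c:=-2i$ for each fixed $i\in\{1,\ldots,k\}$. The lemma then yields $(f+(-2i))\xrightarrow{G}(r+(-2i))$, that is $(f-2i)\xrightarrow{G}(r-2i)$, which is exactly the asserted conclusion. Since $i$ was arbitrary in the range $1,\ldots,k$, the statement holds for all such $i$ at once.

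There is essentially no obstacle here: the entire content of the corollary is packaged into the preceding lemma, and the proof amounts to confirming the non-triviality hypothesis $G\neq\{1\}$ and plugging in $c=-2i$. If I wanted to be slightly more careful, the one point worth stating explicitly is why $G\neq\{1\}$, since without that the lemma does not apply; everything else is a direct substitution.
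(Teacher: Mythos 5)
Your proof is correct and takes essentially the same route as the paper, which presents the corollary as a direct consequence of the preceding lemma instantiated with $c=-2i$. Your explicit check that $G\neq\{1\}$ (via the Weak Nullstellensatz, since $0\in\{P=0\}$) is a hypothesis the paper leaves implicit, and spelling it out is a worthwhile refinement rather than a deviation.
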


The results from this section allow for a computational method for checking resolvability on $\mathbb{H}_{k,a}$. Lemmas~\ref{lem:groeb_block} and~\ref{lem:G1explicit} are used to construct the reduced Gr\"obner basis $G$ directly, and~\cref{cor:rem} efficiently removes the trivial solution from consideration in the criteria provided by~\cref{thm:weak_null}. Altogether these results significantly reduce the number of polynomial reductions required to assess the resolvability of a set of nodes on $\mathbb{H}_{k,a}$.

\subsection{Illustrative Example (Continuation)} We saw in~\Cref{subsec:Illustrative} that $R_0=\{02,11\}$ does not resolve $H_{2,3}$ whereas $R_1=R_0\cup\{22\}=\{02,11,22\}$ does. We can double-check these assertions using~\cref{cor:KerCapP=0} as follows. 

First, recall that for $H_{2,3}$ the variable $z$ is 6-dimensional and should be decomposed in the form $z=\big((z_1,z_2,z_3),(z_4,z_5,z_6)\big)$. Next, the kernel of the matrix given by the corollary for $R_0$ (denoted as $A_0$, see Eq.~\cref{def:A0}) is described by the linear system:
\[\left\{\begin{array}{rcl}
z_1+z_6 &=& 0;\\
z_2+z_5 &=& 0.
\end{array}\right.\]
On the other hand, the roots in $\{P=0\}$ given by~\cref{cor:KerCapP=0} correspond to the polynomial system: 
\[\left\{\begin{array}{ccl}
0 &=& z_1^3-z_1;\\
0 &=& z_2^3-z_2;\\
0 &=& z_3^3-z_3;\\
0 &=& z_1 + z_2 + z_3;\\
0 &=& (2-z_1^2-z_2^2-z_3^2)\cdot(z_1^2+z_2^2+z_3^2);\\
\hline
0 &=& z_4^3-z_4;\\
0 &=& z_5^3-z_5;\\
0 &=& z_6^3-z_6;\\
0 &=& z_4 + z_5 + z_6;\\
0 &=& (2-z_4^2-z_5^2-z_6^2)\cdot(z_4^2+z_5^2+z_6^2);
\end{array}\right.\]
where the horizontal line distinguishes between the first and second block of $P$ (see Eq.~(\ref{def:Bi})). Finally, recall the auxiliary polynomial given by equation~\cref{def:f(z)}:
\[f(z)=z_1^2+z_2^2+z_3^2+z_4^2+z_5^2+z_6^2.\]
Assuming the lexicographic order over the monomials, one can determine that the reduced Gr\"obner basis of $\{A_0z\}\cup P\cup\{f-2\}$ is $\{1\}$; in particular, $\hbox{ker}(A_0)\cap\{P=0\}\cap\{f=2\}=\emptyset$. On the other hand, because the reduced Gr\"obner basis of $\{A_0z\}\cup P\cup\{f-2\}$ is $\{z_1+z_6, z_2+z_5, z_3-z_5-z_6, z_4+z_5+z_6, z_5^2+z_5z_6+z_6^2-1, z_6^3 - z_6\}$, it follows that $\hbox{ker}(A_0)\cap\{P=0\}\cap\{f=4\}\ne\emptyset$ i.e. $\hbox{ker}(A_0)\cap\{P=0\}$ has a non-trivial solution. Consequently, $R_0$ does not resolve $H_{2,3}$.

To confirm that $R_1=R_0\cup\{22\}$ does resolve $H_{2,3}$, note that we only need to add the equation $z_3+z_6=0$ to the previous linear system (the full linear system is now described by the matrix $A_1$, see Eq.~\cref{def:A1}). Using our code, we find that $\hbox{ker}(A_1)\cap\{P=0\}\cap\{f-2\}=\emptyset$ and also that $\hbox{ker}(A_1)\cap\{P=0\}\cap\{f-4\}=\emptyset$ because the associated reduced Gr\"obner basis are both equal to $\{1\}$. As a result, $\hbox{ker}(A_1)\cap\{P=0\}$ has no non-trivial solution, i.e. $R_1$ resolves $H_{2,3}$.

\section{Novel Integer Linear Programming Formulation}
\label{sec:ILP}

For some background about Integer Linear Programming (ILP), see~\cref{app:2}.

In contrast to the ILP approaches of \cite{chartrand2000resolvability,currie2001metric}, our ILP formulation checks the resolvability of a given set rather than searching for minimal resolving sets. Furthermore, it does not pre-compute the distance matrix of a Hamming graph. As before, fix $\Hka$ and a subset of vertices $R$. Letting $z=(z_1,\ldots,z_{ka})$ and using the polynomial set $P$ from equation~(\ref{def:P}), we leverage~\cref{lem:polsys} (with $A$ as in equation~(\ref{def:A}), each row corresponding to a vertex in $R$) to reformulate~\cref{thm:Az=0} as follows:
\begin{equation}
R \text{ does \underline{not} resolve } \Hka \quad \iff \quad \exists z \neq 0 \;\text{such that}\; z\in\hbox{ker}(A)\cap\{P=0\}.
\label{eq:resolveIff}
\end{equation}

To formulate this as an ILP, we use the following result.
\begin{lemma}
Define
\[\mathcal{I}:= \bigcap_{i=1}^k\left\{z\in\mathbb{Z}^{ak}\hbox{ such that }\sum\limits_{j=(i-1)a+1}^{ia} z_j = 0\hbox{ and }  \sum\limits_{j=(i-1)a+1}^{ia} |z_j| \le 2\right\}.\]
Then $\mathcal{I}$ is the intersection of a closed convex polyhedron with the integer lattice $\mathbb{Z}^{ak}$, and  $z\in\{P=0\}$ if and only if $z\in\mathcal{I}$.
\label{lemma:ILP}
\end{lemma}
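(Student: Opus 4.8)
The plan is to prove the two assertions separately, handling the polyhedral description first and then the set equality, with the bulk of the work resting on the characterization of $\{P=0\}$ already established in~\cref{lem:polsys}.

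First I would establish that $\mathcal{I}$ consists of the integer points of a closed convex polyhedron. For each block $i$, the equality $\sum_{j=(i-1)a+1}^{ia} z_j = 0$ cuts out a hyperplane, while the constraint $\sum_{j=(i-1)a+1}^{ia}|z_j|\le 2$ is the intersection of the $2^a$ halfspaces $\sum_{j=(i-1)a+1}^{ia}\varepsilon_j z_j \le 2$ taken over all sign patterns $\varepsilon\in\{\pm1\}^a$ (equivalently, a scaled cross-polytope slice). Each block thus contributes finitely many linear equalities and inequalities; since $\mathcal{I}$ is their common intersection over $i=1,\ldots,k$ intersected with $\mathbb{Z}^{ak}$, and a finite intersection of halfspaces and hyperplanes is a closed convex polyhedron, the first assertion follows.

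Next I would prove the set equality by reducing it to a block-wise statement. By~\cref{lem:polsys}, $z\in\{P=0\}$ if and only if each length-$a$ subvector is a difference of two canonical vectors; the constraints defining $\mathcal{I}$ are likewise block-wise, so it suffices to show, for a single integer subvector $w\in\mathbb{Z}^a$, that $w$ is a difference of two canonical vectors if and only if $\sum_j w_j = 0$ and $\sum_j |w_j|\le 2$. The forward direction is immediate: if $w = e_p - e_q$ for canonical vectors $e_p,e_q\in\mathbb{R}^a$, then $\sum_j w_j = 0$ and $\sum_j|w_j|\in\{0,2\}$. For the reverse direction I would run a short case analysis on the integer value $s:=\sum_j |w_j|$, which the constraints force into $\{0,1,2\}$: if $s=0$ then $w=0=e_p-e_p$; the value $s=1$ is incompatible with $\sum_j w_j=0$, since a single nonzero integer entry of absolute value $1$ contributes $\pm1$ to the block sum; and if $s=2$, then either one entry equals $\pm2$ (again violating the zero-sum condition) or exactly two entries equal $\pm1$, in which case the zero-sum condition forces one to be $+1$ and the other $-1$, so $w=e_p-e_q$. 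Combining this equivalence across all $k$ blocks yields $z\in\{P=0\}\iff z\in\mathcal{I}$.

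I do not expect a genuine obstacle here: once~\cref{lem:polsys} supplies the geometric description of $\{P=0\}$, everything reduces to elementary integer arithmetic. The only points requiring a moment's care are recognizing that the nonlinear $\ell_1$ constraint is genuinely polyhedral (so that $\mathcal{I}$ is amenable to ILP solvers), and noting that integrality of $w$ is essential—over the reals, $\sum_j w_j = 0$ with $\sum_j |w_j|\le 2$ admits many vectors that are not differences of canonical vectors, so the intersection with $\mathbb{Z}^{ak}$ cannot be dropped.
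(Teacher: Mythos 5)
Your proposal is correct and takes essentially the same route as the paper's proof: the same sign-pattern argument showing the $\ell_1$ constraint is an intersection of halfspaces (hence polyhedral), and the same block-wise integer case analysis for the set equality, ruling out entries of absolute value $2$ via the zero-sum condition and concluding each block is either zero or has exactly one $+1$ and one $-1$. The only cosmetic difference is that you phrase the reverse inclusion through the canonical-vector characterization of \cref{lem:polsys}, whereas the paper verifies directly that the polynomials in each block $B_i$ vanish; the two formulations are equivalent by that same lemma.
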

\begin{proof}
Since the intersection of convex sets is convex, and the intersection of a finite number of polyhedra is a polyhedron, it follows from standard arguments that
\begin{eqnarray*}
\mathcal{J}_1 &:=& \bigcap_{i=1}^k\left\{z\in\mathbb{R}^{ak}\hbox{ such that }\sum\limits_{j=(i-1)a+1}^{ia} z_j = 0\right\};\\
\mathcal{J}_2 &:=& \bigcap_{i=1}^k\left\{z\in\mathbb{R}^{ak}\hbox{ such that } \sum\limits_{j=(i-1)a+1}^{ia} |z_j| \le 2\right\};
\end{eqnarray*}
are convex subsets of $\mathbb{R}^{ak}$, and $\mathcal{J}_1$ is a polyhedron. We claim that $\mathcal{J}_2$ is also a polyhedron, for which it suffices to check that each set in the intersection that defines it is a polyhedron. Without loss of generality, we do so only for the case with $i=1$. Indeed, because $\{z\in\mathbb{R}^{ak}\hbox{ such that } \sum_{j=1}^a |z_j| \le 2\}$ is invariant under arbitrary coordinate sign flips, we have that
\[\left\{z\in\mathbb{R}^{ak}\hbox{ such that } \sum_{j=1}^a |z_j| \le 2\right\}=\bigcap_{w\in\{-1,1\}^{ak}}\left\{z\in\mathbb{R}^{ak}\hbox{ such that } \sum_{j=1}^a w_jz_j \le 2\right\},\]
which implies that $\mathcal{J}_2$ is also a polyhedron. Since $\mathcal{I}=(\mathcal{J}_1\cap\mathcal{J}_2\cap\mathbb{Z}^{ak})$, the first part of the lemma follows.

From the proof of~\cref{lem:polsys} it is immediate that $\{P=0\}\subset\mathcal{I}$. To show the converse inclusion, observe that $\{P=0\}=\cap_{i=1}^k\{B_i=0\}$ where the $B_i$'s are as defined in equation~(\ref{def:Bi}). To complete the proof, it suffices therefore to show that $\mathcal{I}_i\subset\{B_i=0\}$, where
\[\mathcal{I}_i:=\left\{z\in\mathbb{Z}^{ak}\hbox{ such that }\sum\limits_{j=(i-1)a+1}^{ia} z_j = 0\hbox{ and }  \sum\limits_{j=(i-1)a+1}^{ia} |z_j| \le 2\right\}.\]
Indeed, if $z\in\mathcal{I}_1$ then, because the coordinates of $z$ are integers, the condition $\sum_{j=1}^a |z_j| \le 2$ implies that $|z_j|\in\{0,1,2\}$ for $j=1,\ldots,a$. If $|z_j|=2$ for some $j$ then $\sum_{j=1}^az_j=\pm2$, which is not possible. Thus $z_j\in\{0,\pm1\}$ for $j=1,\ldots,a$; in particular, $z_j^3-z_j=0$. On the other hand, the condition $\sum_{j=1}^az_j=0$ implies that the number of 1's and (-1)'s in $(z_1,\ldots,z_a)$ balance out; in particular, since $\sum_{j=1}^a |z_j| \le 2$, either $(z_1,\ldots,z_a)$ vanishes, or it has exactly one 1 and one (-1) entry and all other entries vanish; in particular, $(2-\sum_{j=1}^az_j^2)\cdot\sum_{j=1}^az_j^2=0$. Thus, $z\in\{B_1=0\}$. The case for $i>1$ is of course the same.
\end{proof}

\begin{remark}
With current ILP solvers, one can impose that $z\in\{0,\pm1\}^{ak}$ simply as $|z_i|\le1$ for $i=1,\ldots,ak$. On the other hand, while a constraint like $\sum_{j=1}^a |z_j| \le 2$ is clearly polyhedral, it is not in the form of an affine equality or inequality suitable for ILP solvers. Nevertheless, standard reformulation techniques can convert this into a set of affine equalities and inequalities in a higher dimensional space. For example, in the product space with variables $(\tilde{z},w)$, we can write the constraint as $\sum_{j=1}^a w_j \le 2$ and $|\tilde{z}_j| \le w_j$ (i.e., $\tilde{z}_j \le w_j$ and $-\tilde{z}_j \le w_j$), which leads to an equivalent formulation of the original ILP. One may handle such reformulations automatically using the Matlab package \texttt{CVX}~\cite{cvx}.
\end{remark}

It only remains to encode the fact that we look for a \emph{nonzero} root in $\{P=0\}$, which we do via the ILP in the following theorem:

\begin{theorem}
A subset of vertices $R$ is \underline{not} resolving on $\Hka$ if and only if the solution to the following ILP is less than zero:
\begin{alignat}{2} 
\label{eq:newILP}
&\min_{z\in\mathbb{R}^{ak}} \;  &  &\sum_{j=1}^{ak} 2^j z_j  \notag \\
&\textnormal{subject to} \;  &  & Az=0 \;\textnormal{and}\; z\in\mathcal{I},
\end{alignat}
where $A$ is defined in equation~\cref{def:A}.
\end{theorem}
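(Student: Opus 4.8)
The plan is to reduce the claim to the equivalence already established in \cref{eq:resolveIff} together with \cref{lemma:ILP}: the set $R$ fails to resolve $\Hka$ precisely when the feasible set $\ker(A)\cap\mathcal{I}$ of the ILP in \cref{eq:newILP} contains a \emph{nonzero} point. Since $z=0$ always lies in this feasible set and yields objective value $0$, and since $\mathcal{I}$ confines each block of $z$ via $\sum_j|z_j|\le 2$ so that the feasible set is a finite subset of the integer lattice, the minimum in \cref{eq:newILP} is attained and is always $\le 0$. The whole theorem therefore collapses to a single assertion: the minimum is strictly negative if and only if a nonzero feasible point exists.

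First I would establish the crucial property that the linear functional $\sum_{j=1}^{ak}2^jz_j$ never vanishes on a nonzero feasible point. By \cref{lemma:ILP} every feasible $z$ has entries in $\{-1,0,1\}$. If such a $z$ is nonzero, let $m$ be the largest index with $z_m\ne0$; then $|2^mz_m|=2^m$, whereas $|\sum_{j<m}2^jz_j|\le\sum_{j=1}^{m-1}2^j=2^m-2<2^m$, so the top term dominates and $\sum_j 2^jz_j\ne0$. This ``highest-bit'' separation is the heart of the argument and the reason the objective uses the strictly geometric weights $2^j$; any weights growing fast enough to dominate the tail would serve equally well.

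Next I would exploit symmetry. Both $\ker(A)$ and $\mathcal{I}$ are invariant under $z\mapsto -z$, since the constraints $Az=0$, the per-block balance $\sum_j z_j=0$, and the per-block bound $\sum_j|z_j|\le2$ are all preserved under negation, while the objective is odd. Consequently, if a nonzero feasible $z$ exists, then $-z$ is feasible too, and by the non-vanishing property exactly one of the values $\pm\sum_j 2^jz_j$ is strictly negative; hence the minimum is $<0$. Conversely, if the minimum is $<0$, the minimizer cannot be $0$ (whose objective value is $0$), so a nonzero feasible point exists. Chaining this with \cref{eq:resolveIff} and \cref{lemma:ILP} yields the stated equivalence.

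I expect the only genuine obstacle to be pinning down the non-vanishing of the objective on nonzero feasible points; the remaining steps (feasibility of $0$, finiteness and attainment of the minimum, and the sign argument via symmetry) are routine once that separation is in hand.
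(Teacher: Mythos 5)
Your proposal is correct and follows essentially the same route as the paper's proof: reduce via \cref{eq:resolveIff} and \cref{lemma:ILP} to showing the optimum is negative exactly when a nonzero feasible point exists, establish non-vanishing of the objective at any nonzero feasible $z$ by the dominance of the highest nonzero coordinate ($\sum_{j<m}2^j < 2^m$ with $|z_j|\le 1$), and conclude via the symmetry $z\mapsto -z$ of the feasible set. The only additions beyond the paper's argument are the explicit observations that the feasible set is finite (so the minimum is attained) and the closed-form bound $2^m-2<2^m$, both harmless refinements of the same idea.
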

\begin{proof}
Using equation~\cref{eq:resolveIff} and \cref{lemma:ILP}, it remains to show that the objective function is less than zero if and only if there is a non-zero feasible $z$. Suppose there is not a non-zero feasible $z$. Clearly $z=0$ is feasible, hence it is the only feasible point for the ILP, and the objective value is zero. Now suppose there is some non-zero feasible $z$. Let $j'$ be the largest non-zero coordinate. Then because $\sum_{j=1}^{j'-1} 2^j < 2^{j'}$, and because each entry is bounded $|z_j|\le 1$, the objective value at this $z$ is non-zero. If the objective value is negative, this proves the value of the ILP is negative; if the objective value is positive, then observe that $(-z)$ is also feasible and has a negative objective value, and hence the value of the ILP is negative.
\end{proof}

\begin{remark}
If the solution to the ILP is less than zero and hence $R$ is not a resolving set, then each optimal vector $z$ is the difference of the column-major ordering of the one-hot encodings of two $\kmers$ which are not resolved by $R$; in particular, a vector that resolves these $\kmers$ needs to be added to $R$ to resolve $\Hka$. 
\end{remark}

\subsection{Practical formulations and roundoff error}
\label{sec:ILP_practical}

When $ak$ is small, it is feasible to directly solve the ILP in equation \cref{eq:newILP}. One issue with larger values of $ak$, besides an obvious increase in run-time, is that the values of $2^j$ in the objective function quickly lead to numerical overflow. A simple fix is to replace each coefficient $c_j = 2^j$ with an independently drawn realization of a standard normal random variable $\mathcal{N}(0,1)$.  Since these new coefficients are independent of the feasible set, if the latter is truly larger than $\{0\}$, the probability that the entire feasible set is in the null-space of the linear function $\sum_{j=1}^{ak}c_jz_j$ is zero. Of course, again due to finite machine precision, this otherwise almost surely exact method may only be approximate. Admittedly, when running the ILP with the random coefficients $c_j$'s, finding an undoubtedly negative solution to the ILP would certify that the set $R$ is not resolving. However, if the solution is just slightly negative or vanishes within machine precision, the assessment about $R$ should be taken with a grain of salt. In this case, one should draw a new set of random coefficients and re-run the ILP to reassess the resolvability of $R$.

Another consideration is that the ILP solver wastes time finding a feasible point with the smallest possible objective, when we only care if there is a feasible point with objective smaller than $0$. Thus we could solve the feasiblity problem
\begin{alignat*}{2} \label{eq:feas1}
&\textnormal{Find}  \;  &  &z\in\mathbb{R}^{ak}  \notag \\
&\textnormal{subject to} \;  &  & Az=0 \;\textnormal{and}\; z\in\mathcal{I} \;\textnormal{and}\;  \langle c, z \rangle < 0
\end{alignat*}
where $c_j = 2^j$ or $c_j \sim \mathcal{N}(0,1)$ as discussed above. (Feasibility problems can be encoded in software by minimizing the $0$ function.) Unfortunately this is not an ILP because $\{ c \mid  \langle c, z \rangle <0 \}$ is not a closed set. We can partially ameliorate this by solving
\begin{alignat}{2} \label{eq:feas2}
&\textnormal{Find} \;  &  &z\in\mathbb{R}^{ak}  \notag \\
&\textnormal{subject to} \;  &  & Az=0 \;\textnormal{and}\; z\in\mathcal{I} \;\textnormal{and}\; \langle c, z \rangle \le -\delta
\end{alignat}
where $\delta>0$ is a small number (our code uses $\delta=10^{-3}$). Finding a feasible point $z$ is then proof that the set $R$ does not resolve $\Hka$. If the solver says the above problem is infeasible, it could be that $\delta$ was too large and hence the computation was inconclusive. In this case, one could run the slower program \cref{eq:newILP}.

\section{Computational Complexity Experiments}
\label{sec:complexity_experiments}

The theoretical framework and algorithms proposed in this paper provide a novel way of approaching resolvability on Hamming graphs. To show the computational feasibility and practicality of our methods, we compare the average run-time of both the ILP and Gr\"obner basis algorithms against the brute force approach for checking resolvability. Our experiments use Python 3.7.3 and SymPy version 1.1.1~\cite{SymPy}, and the commercial ILP solver \texttt{gurobi} ver.~7.5.2~\cite{gurobi}.

In the~\cref{tab:k_a_pairs}, we present the average run-time and standard deviation of the algorithms on reference test sets for Hamming graphs of increasing sizes. \cref{fig:runtime} displays the mean run-times as a function of the graph size, and best linear fit for each method. As seen in the table and figure, the brute force approach is faster on only the smallest Hamming graphs (with fewer than $1000$ nodes) whereas the ILP solution is exceptionally fast even as the Hamming graph grows to more than $6000$ nodes. For small problems, the time taken to solve the ILP is likely dominated by the overhead cost of using \texttt{CVX} to recast the ILP into standard form. The run-time results show a promising improvement in computational time over the brute force approach which will only become more pronounced on massive Hamming graphs. Additionally, the brute force approach is infeasible on these larger graphs due to significant memory costs. 

The ILP algorithm is exceptionally quick, beating all other methods for Hamming graphs with more than 1000 nodes, but it cannot guarantee that a set is resolving. The Gr\"obner basis algorithm by contrast is much slower on average but is a deterministic method of showing resolvability. ILP can be used to quickly determine possible resolving sets which are then verified by the Gr\"obner basis algorithm. In this way, the two methods are symbiotic and cover each other's weaknesses. We illustrate this in the next section. 

\begin{table}
\centering 
\tiny
\begin{tabular}{cS[table-format=4]S[table-format=1.2e-1]S[table-format=1.2e-1]S[table-format=1.2e-1]S[table-format=1.2e-1]S[table-format=1.2e-1]S[table-format=1.2e-1]}
\toprule
& & \multicolumn{2}{c}{Brute Force} & \multicolumn{2}{c}{Gr\"obner Basis} & \multicolumn{2}{c}{ILP} \\
\cmidrule(lr){3-4}  \cmidrule(lr){5-6}  \cmidrule(lr){7-8} 
($k,a$) & {$a^k$} & {Mean} & {SD} & {Mean} & {SD} & {Mean} & {SD} \\
\midrule
(2,2) & 4 & 3.88e-05 & 1.51e-06 & 6.79e-03 & 1.06e-03 & 1.28e-01 & 3.53e-03 \\
(2,4) & 16 & 2.47e-04 & 6.83e-05 & 2.25e-02 & 2.59e-03 & 1.16e-01 & 4.84e-03 \\
(3,3) & 27 & 5.02e-4 & 2.45e-4 & 2.83e-2 & 7.92e-3 & 1.21e-01 & 8.12e-3 \\
(5,2) & 32 & 6.61e-04 & 3.29e-04 & 3.14e-02 & 5.27e-03 & 1.28e-01 & 3.91e-03 \\
(3,5) & 125 & 8.98e-03 & 5.38e-03 & 1.12e-01 & 2.91e-02 & 1.37e-01 & 7.02e-03 \\
(5,3) & 243 & 2.78e-2  & 1.96e-2  & 1.22e-1 & 7.88e-2 &   1.20e-1 & 8.12e-3 \\
(8,2) & 256 & 2.85e-02 & 2.21e-02 & 9.87e-02 & 1.96e-02 & 1.17e-01 & 1.59e-03 \\
(4,4) & 256 & 3.13e-02 & 1.97e-02 & 1.27e-01 & 3.90e-02 & 1.37e-01 & 9.58e-03 \\
(5,5) & 3125 & 5.19e+00 & 3.17e+00 & 4.00e+00 & 3.54e+00 & 1.35e-01 & 1.09e-02 \\
(12,2) & 4096 & 6.28e+00 & 5.34e+00 & 2.93e-01 & 7.24e-02 & 1.24e-01 & 2.39e-03 \\
(6,4) & 4096 & 7.78e+00 & 4.65e+00 & 7.73e-01 & 3.67e-01 & 1.52e-01 & 8.99e-03 \\
(8,3) & 6561 & 2.02e+01 & 1.40e+01 & 1.12e+01 & 1.47e+01 & 1.62e-01 & 1.41e-02 \\
\bottomrule
\end{tabular}
\caption{Time in seconds required to determine resolvability for each technique. Fifty resolving and fifty non-resolving sets, selected uniformly at random, were considered for each Hamming graph $\Hka$. Means and standard deviations consider five replicates per set.}
\label{tab:k_a_pairs}
\end{table}

\begin{figure}
    \centering
    \includegraphics[width=2.7in]{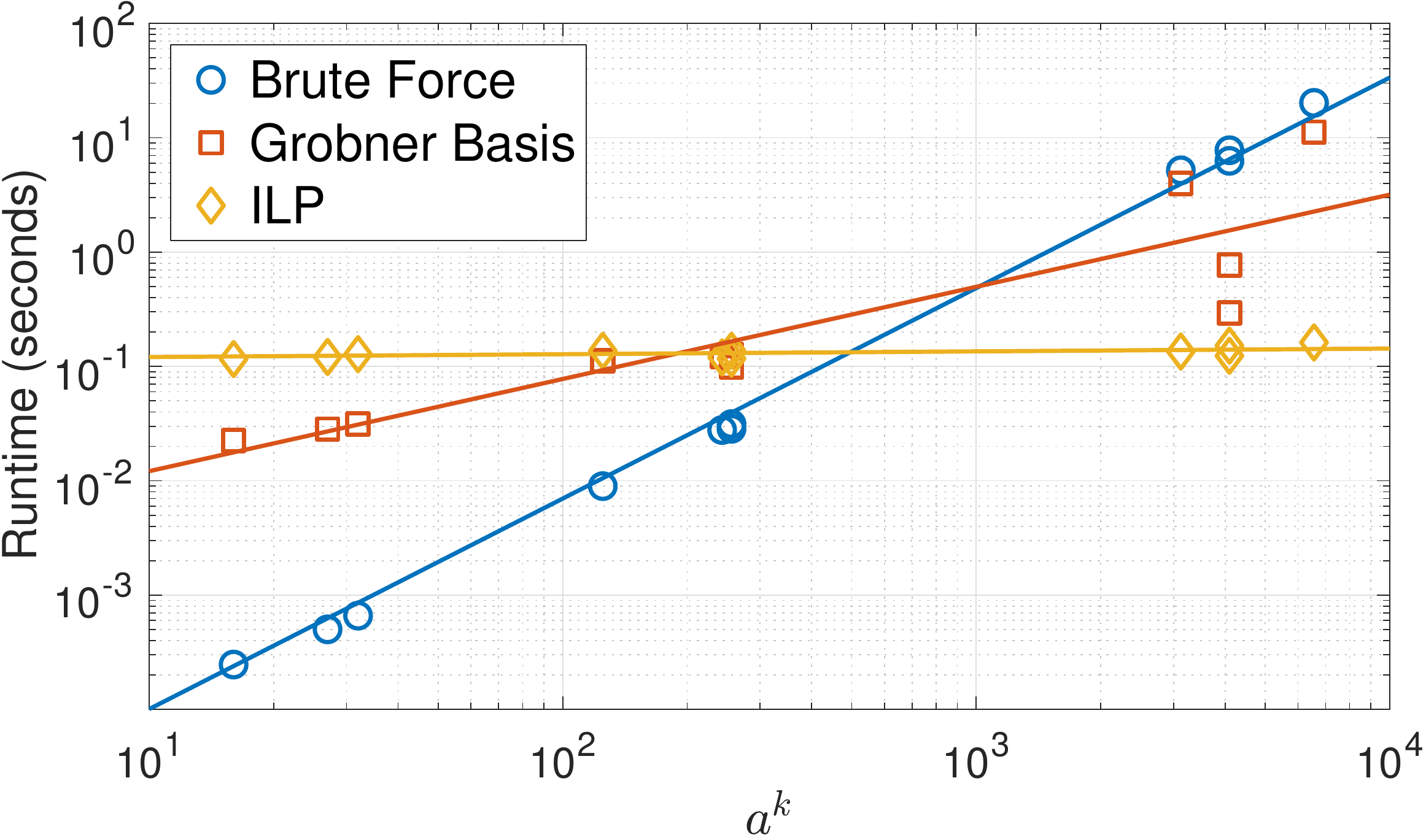}
    \caption{Data from~\cref{tab:k_a_pairs} with lines of best fit (on log-transformed data) for each method.}
    \label{fig:runtime}
\end{figure}

\section{Low-dimensional Protein Representations}
\label{sec:protein_representation}

Symbolic information pervades modern data science. With the advent and popularization of high-throughput sequencing assays, this is particularly true in the field of computational biology where large volumes of biological sequence data have become critical for studying and understanding the behavior of cells. Analysis of these sequences, however, presents significant challenges. One major issue is that many powerful analysis techniques deal with numeric vectors, not arbitrary symbols. As a result, biological sequence data is typically mapped to a real space before such methods are applied. Two of the most common mappings use K-mer count~\cite{leslie2002spectrum} and one-hot encodings (also called binary vectors)~\cite{cai2003support}. K-mer count vectors represent symbolic sequences by their counts of each possible K-mer.

Resolving sets can be used to define low-dimensional mappings as well. To fix ideas we focus on octapeptides, that is proteins composed of 8 amino acids. With a total of 20 possible amino acids (which we represent as {\ttfamily {\footnotesize a,r,n,d,c,q,e,g,h,i,l,k,m,f, p,s,t,w,y,v}}) and imposing the Hamming distance across these sequences, we have the Hamming graph $\mathbb{H}_{8,20}$. This graph is massive. It has $25.6$ billion vertices and roughly $1.9$ trillion edges rendering most methods of discovering small resolving sets, including the ICH algorithm, useless. Utilizing a constructive algorithm, a resolving set of size 82, which we call $R$, was discovered for $\mathbb{H}_{8,20}$ in~\cite{TilLla19}. However, it is not known whether $R$ contains a proper subset that is still resolving. Here, we address this problem applying the results of sections~\ref{sec:grobner} and \ref{sec:ILP}.

Starting with lower and upper bounds $L=1$ and $U=82$ respectively, we implement a binary search for $\beta(\mathbb{H}_{8,20})$. With $s=\frac{L+U}{2}$ as the current subset size to check, up to 1000 subsets of $R$ are selected at random. The ILP approach (\cref{sec:ILP}) then provides an efficient method for testing the feasibility problem outlined in \cref{thm:Az=0} for these subsets. If any subset passes this test, the upper bound is set to $s$. Otherwise, $s$ becomes the lower bound. This process is repeated until $L=(U-1)$. Following this procedure, we found the following set of size $77$:
\[
r:=\left\{
{
\scriptsize
\begin{tabular}{lllllll} 
aaaraaaa, & arwaaaaa, & ccchhhhh, & ccchhhhi, & ccchhhia, & ccchhiaa, & ccchiaaa,\\
ccciaaaa, & cnsaaaaa, & dddeeeee, & dddeeeeg, & dddeeega, & dddeegaa, & dddegaaa,\\
dddgaaaa, & dhfaaaaa, & eagaaaaa, & eeefaaaa, & eeemfaaa, & eeemmfaa, & eeemmmfa,\\
eeemmmmf, & eeemmmmm, & fffaaaaa, & gggppppp, & gggpppps, & gggpppsa, & gggppsaa,\\
gggpsaaa, & gggsaaaa, & hhhttttt, & hhhttttw, & hhhtttwa, & hhhttwaa, & hhhtwaaa,\\
hhhwaaaa, & hpvaaaaa, & iiivaaaa, & iiiyvaaa, & iiiyyvaa, & iiiyyyva, & iiiyyyyv,\\
iiiyyyyy, & kkkaaaaa, & klqaaaaa, & lllaaaaa, & mkyaaaaa, & mmmaaaaa, & nnnccccc,\\
nnnccccq, & nnncccqa, & nnnccqaa, & nnncqaaa, & nnnqaaaa, & nstaaaaa, & pppaaaaa,\\
qpkaaaaa, & qqqkaaaa, & qqqlkaaa, & qqqllkaa, & qqqlllka, & qqqllllk, & qqqlllll,\\
qyeaaaaa, & rrrdaaaa, & rrrndaaa, & rrrnndaa, & rrrnnnda, & rrrnnnnd, & rrrnnnnn,\\
sisaaaaa, & svtaaaaa, & ttcaaaaa, & vfraaaaa, & wmpaaaaa, & wwdaaaaa, & yglaaaaa
\end{tabular}
}
\right\}.
\]
Since the ILP formulation does not guarantee that this set is resolving, we verified the result using a parallelized version of the Polynomial Roots Formulation (\cref{sec:grobner}) so that the Gr\"obner bases of multiple auxiliary polynomials (Eq.~\cref{def:f(z)}) may be determined simultaneously. Thus, we have found a set $r\subset R$ of size 77 that resolves $\mathbb{H}_{8,20}$; in particular, $\beta(\mathbb{H}_{8,20})\le77$, which improves the bound of~\cite{TilLla19}, and all $25.6$ billion octapeptides may be uniquely represented with only 77 dimensions. In contrast, a $2$-mer count vector representation would require 400 dimensions and a one-hot encoding 160 dimensions.

\begin{remark}
We replicated the verification of $r$ as a resolving set of $H_{8,20}$ using our Polynomial Roots Formulation 10 times across 32 computer cores. Overall, a maximum of approximately 380 megabytes of memory per core (SD $\sim 0.5$ MB) and 6 hours and 20 minutes (SD $\sim142$ s) were required to demonstrate the resolvability of $r$. Memory usage was determined using the Slurm workload manager \verb|sacct| command and \verb|maxRSS| field, while time was measured using Python's \verb|time| module.
\end{remark}

\newpage

\appendix

\section{Gr\"obner Basis}
\label{app:1}

In what follows, $z=(z_1,\ldots,z_d)$ is a $d$-dimensional variable and, unless otherwise stated, polynomials are functions of $z$ with real coefficients.

\subsection{Polynomial Ideals} 
\label{app:ideals}

A \textit{polynomial ideal} $I$ is a non-empty set of polynomials with the property that if $f,g\in I$ and $c\in\mathbb{R}$ then $cf, f+g, fg\in I$. 

The polynomial ideal associated with a non-empty and finite set $P = \{p_1,\ldots,p_m\}$ of polynomials in the variable $z$ is the set defined as
\[I(P):=\left\{\sum_{i=1}^mq_i\cdot p_i,\hbox{ with $q_1,\ldots,q_m$ polynomials in $z$}\right\}.\]

Polynomial ideals are useful to characterize the complex numbers $z$ such that $p_i(z)=0$ for all $i=1,\ldots,m$. Indeed, $z\in\{P=0\}$ if and only if $z\in\{I(P)=0\}$.

\subsection{Monomial Orderings} 
\label{app:order}

A \textit{monomial} (in the variable $z$) is any product of the form $z_1^{a_1}\cdots z_d^{a_d}$, where $a_1,\ldots,a_d\ge0$ are integers. This product is often written $z^a$ with  $a=(a_1,\ldots,a_d)$.

A \textit{monomial ordering} is a total ordering of the monomials such that (i) if $z^a<z^b$ then, for any monomial $z^c$, $z^{a+c}<z^{b+c}$; and (ii) $1<z^a$ when $z^a\ne1$. 

A common example of monomial ordering is the so-called \textit{lexicographic order}. Under this ordering, $z^a<z^b$ if there is an index $i$ such that $a_j=b_j$ for all $1\le j<i$ but $a_i<b_i$. Another example is the \textit{graded lexicographic order} under which $z^a<z^b$ if either (i) $\sum_{i=1}^da_i<\sum_{i=1}^db_i$; or (ii) $\sum_{i=1}^da_i=\sum_{i=1}^db_i$ but $z^a$ is smaller than $z^b$ under the lexicographic order. Both of these orderings can be reversed giving the \textit{reversed lexicographic order} and the \textit{graded reverse lexicographic order}, respectively.

\textbf{In what remains of~\cref{app:1}, a fixed monomial ordering is assumed.} In particular, each non-zero polynomial $p$ may be uniquely written in the form $p=\sum_{i=1}^t c_i m_i$, where $t\ge1$ is an integer, $c_1,\ldots,c_t$ are real coefficients, and $m_1>\cdots>m_t$ are monomials in descending order. This allows us to define $\hbox{LM}(p) := m_1$ (the \textit{leading monomial} of $p$), $\hbox{LC}(p) := c_1$ (the \textit{leading coefficient} of $p$), and $\hbox{LT}(p) := c_1 m_1$ (the \textit{leading term} of $p$).

\subsection{Polynomial Reductions} 
\label{app:polred}

For a given non-empty set $P$ of polynomials, every polynomial $f$ can be represented in the form $f=(r+g)$, with $r,g$ polynomials such that $g\in I(P)$. Representing $f$ in this form is called \textit{reducing} $f$ by $P$. The term $r$ is called the \textit{reduction} of $f$ by $P$; which is expressed in writing as $f\stackrel{P}{\rightarrow} r$. Observe that reductions are typically not unique. This is because, if $f=(r+g)$, with $g\in I(P)$, then $f=(r-h)+(g+h)$ for any polynomial $h$, however, $(g+h)\in I(P)$ when $h\in I(P)$.

Reductions can be computed using multivariate long-division as follows. First, set $r=0$ and $g=f$, and look for the smallest index $i$ such that $LM(p_i)$ divides $LM(g)$. If such an index exists, set $g=g-LT(g)\cdot p_i/LT(p_i)$ so that the $LT(g)$ and the $LT(p_i)$ cancel. Otherwise, set $g=g-LT(g)$ and $r=r+LT(g)$. Continue this process until $g=0$. This will produce a remainder $r$ where no monomial of $r$ is divisible by any $LM(p_i)$.

In general, reductions of the form $f\stackrel{P}{\rightarrow} r$ with $r\notin I(P)$ are also not unique. This is because the polynomial $r$ obtained by the long-division depends on the order in which the polynomials in $P$ are indexed. This lack of uniqueness is the primary motivation for Gr\"obner bases.

\subsection{Buchberger's Criterion} 
\label{app:Buch}

In this section we give a characterization of Gr\"obner bases due to Buchberger~\cite{Cox_Little_OShea:1998}.

The \textit{least common multiple} between $z^a$ and $z^b$ is the monomial $\hbox{LCM}(z^a,z^b):=z^c$ with $c=(\max\{a_1,b_1\},\ldots,\max\{a_d,b_d\})$. 

Given two polynomials $p$ and $q$ such that $\hbox{LCM}(\hbox{LM}(p),\hbox{LM}(q)) = z^c$, their \textit{S-polynomial} is defined as
\[\hbox{Spoly}(p,q) := z^c\cdot\left(\frac{p}{\hbox{LT}(p)} - \frac{q}{\hbox{LT}(q)}\right).\]

A non-empty set $G = \{g_1,\ldots,g_n\}$ of polynomials is called a \textit{Gr\"obner basis} for a polynomial ideal $I$ if (i) $I(G)=I$, and (ii) $\hbox{Spoly}(g_i,g_j)\stackrel{G}{\rightarrow} 0$ for all $g_i,g_j\in G$. 

Gr\"obner bases have the following property with regards to reductions: $f\stackrel{G}{\rightarrow} 0$ if and only if $f \in I(G)$, otherwise $f\stackrel{G}{\rightarrow}r$ for some $r \notin I(G)$.

A Gr\"obner basis $G$ is called \textit{reduced} if for all $g_i\in G$, (iii) $LC(g_i)=1$, and (iv) for all $g_j\in G\setminus\{g_i\}$, no monomial of $g_j$ is divisible by $LM(g_i)$.

Unlike Gr\"obner bases, the reduced Gr\"obner basis of a polynomial ideal is unique.

\subsection{Buchberger's Algorithm}
\label{app:BuchAlg}

This is a method for generating a Gr\"obner basis for a polynomial ideal $I(P)$ based on Buchberger's criterion (see~\cref{algo:1}). The key idea of the algorithm is to add to the initially empty Gr\"obner basis S-polynomials of pairs in $P$ which do not reduce to $0$. This by construction satisfies Buchberger's criterion and hence produces a Gr\"obner basis.

Buchberger's Algorithm, however, does not necessarily produce a reduced Gr\"obner basis. Such a reduction can be computed using~\cref{algo:2}. This algorithm is the simplest but also least efficient for computing Gr\"obner bases.

The computation of Gr\"obner bases is an active field of research with many different approaches. There are matrix reduction based algorithms, such as Faug\'ere's F4 algorithm~\cite{Faugere:1999}, as well as signature based algorithms, like Faug\'ere's F5 algorithm and its variants F5C and F5B~\cite{Faugere:2002,Eder:2010,Sun_Wang:2018}.

\begin{algorithm}
\begin{small}
\caption{Buchberger's algorithm for computing a Gr\"obner basis}
    \begin{algorithmic}
        \STATE{Input: $P$, $>$}
            \STATE{$G$ = $P$}
            \STATE{$SP = \{\hbox{Spoly}(g_i,g_j) \vert \forall i < j$,  $g_i,g_j \in G\}$}
            \WHILE{$SP$ not empty}
                \STATE Select $S \in SP$
                \STATE $SP = SP \setminus \{S\}$
                \STATE $r = S\stackrel{G}{\rightarrow}r$
                \IF{$r \neq 0$}
                    \STATE Add $\hbox{Spoly}(r,g_i)$ to $SP$ for each $g_i \in G$ 
                    \STATE $G$ = $G \cup \{r\}$
                \ENDIF
            \ENDWHILE
            \RETURN G
    \end{algorithmic}
\label{algo:1}
\end{small}
\end{algorithm}

\begin{algorithm}
\begin{small}
\caption{Algorithm for reducing a Gr\"obner basis}
    \begin{algorithmic}
        \STATE{Input: $G$, $>$}
            \STATE For each $g_i \in G$, $g_i=\frac{g_i}{LC(g_i)}$ 
            \FOR{$g_i \in G$}
                \STATE $H=G\setminus\{g_i\}$
                \STATE $r=g_{i}\stackrel{H}{\rightarrow}r$
                \IF{$r \neq 0$}
                    \STATE $g_i = r$
                \ELSE
                    \STATE $G=G\setminus\{g_i\}$
                \ENDIF
            \ENDFOR
            \RETURN G
    \end{algorithmic}
\label{algo:2}
\end{small}
\end{algorithm}

\section{Linear Programming}
\label{app:2}

A linear program (LP) is any optimization problem that minimizes or maximizes a linear objective function over a (possibly unbounded) closed convex polyhedron in Euclidean space, i.e., a finite number of affine equalities and inequalities (but not strict inequalities). In the combinatorial model of complexity, LP's are known to be solvable in polynomial time~\cite{khachiyan1979polynomial}. If the variables are constrained to be integers, the LP is an integer linear program (ILP), or more generally, if some of the variables are constrained to be integers and others are not, it is a mixed integer linear program (MILP); one usually does not make the distinction between ILP and MILP, since all ILP solvers also solve MILP.

Unlike LPs, ILP can encode NP-Hard decision problems, and thus in the worst case they are intractable for large problems. However, because ILPs usually admit relaxation (e.g. dropping some constrains), one can find upper and lower bounds on the value of an ILP, and thus, when the bounds meet, produce a certificate of optimality. Standard techniques such as branch-and-bound produce a large tree of different relaxations of the ILP, each node a LP and thus solvable. Once all leaf nodes are visited, one has a guarantee of the optimal solution, but there may be combinatorially many leaf nodes. Modern ILP solvers use clever heuristics to determine the order in which to traverse the tree, and with a good choice, entire branches of the tree can quickly be pruned. Thus ILP solvers combine the speed of heuristics with provable optimality. Furthermore, because ILPs are common in industry, there are many high-quality commercial solvers with excellent implementations such as \texttt{gurobi}~\cite{gurobi}.

Because in the worst-case ILPs can take exponentially long to solve, they are sometimes overlooked in the mathematician's toolbox. Nevertheless, because they are general, they are the focus of much research and thus the software to solve them continues to improve rapidly. To quote Dimitris Bertsekas, 
``in the last twenty-five years (1990--2014), algorithmic advances in integer optimization combined with hardware improvements have resulted in an astonishing 200 billion factor speedup in solving MIO [mixed-integer optimization] problems'' \cite{Bertsimas13lecture}. Thus the main challenge is recognizing when a problem with a combinatorial flavor can be recast as an ILP.

\section*{Acknowledgments}
This research was partially funded by NSF grant No. 1836914. The authors acknowledge the BioFrontiers Computing Core at the University of Colorado Boulder for providing High-Performance Computing resources (fund\-ed by NIH grant No. 1S10OD012300), supported by BioFrontiers IT group.

\bibliographystyle{siamplain}
\bibliography{references}
\end{document}